\newcounter{enumi2}
\newtheorem{example}[enumi2]{Example}
\newtheorem{prop}[enumi]{Proposition}
\begin{document}
%%%%%%%%%%%%%%%%%%%%%%%%%%%%%%%%%%%%%%%%%%%%
%%%%%%%%%%%%%%%%%%%%%%%%%%%%%%%%%%%%%%%%%%%%
\title{Ordinal Imitative Dynamics}%\footnote{I am grateful to Antonio Penta, Marzena Rostek, and Dean Corbae for their comments and suggestions; to Luis Izquierdo and Segis Izquierdo for introducing me to and helping me with EvoDyn-3s, and especially to my advisor Bill Sandholm for his time, support, and encouragement at all stages of this project.}}
\author{George Loginov\footnote{Department of Economics, Augustana University, 2001 S Summit Ave, Sioux Falls, SD 57197, USA. Email: gloginov@augie.edu}}
\maketitle
\begin{singlespace}
%%%%%%%%%%%%%%%%%%%%%%%%%%%%%%%%%%%%%%%%%%%%%%%%%%%%%%%%%%%%%%%%%%%%%%%%%%%%%%%%%%%%%%%%%
%%%%%%%%%%%%%%%%%%%%%%%%%%%%%%%%%%%%%%%%%%%%%%%%%%%%%%%%%%%%%%%%%%%%%%%%%%%%%%%%%%%%%%%%%
\begin{abstract}
\noindent This paper introduces an evolutionary dynamics based on \textit{imitate the better realization} (IBR) rule. Under this rule, agents in a population game imitate the strategy of a randomly chosen opponent whenever the opponent`s realized payoff is higher than their own. Such behavior generates an ordinal mean dynamics which is polynomial in strategy utilization frequencies. We demonstrate that while the dynamics does not possess Nash stationarity or payoff monotonicity, under it pure strategies iteratively strictly dominated by pure strategies are eliminated and strict equilibria are locally stable. We investigate the relationship between the dynamics based on the IBR rule and the replicator dynamics. In trivial cases, the two dynamics are topologically equivalent. In Rock-Paper-Scissors games we conjecture that both dynamics exhibit the same types of behavior, but the partitions of the game set do not coincide. In other cases, the IBR dynamics exhibits behaviors that are impossible under the replicator dynamics. 
\end{abstract} 
\section{Introduction}
\noindent When information about the available strategies and their payoffs is limited, it may be reasonable for players in a population game to copy the behavior of their opponents if it yields or at least seems to yield better payoffs. Such copying gives rise to a family of imitative revision protocols, in which a player's decision to switch strategies depends on some summary of the relative performance of a random sample of opponents the player gets to observe. 

The two components that comprise any imitative protocol are the sampling procedure which determines the candidates to be imitated, and the conditional imitation rate which describes the likelihood of imitation given the information about the candidates' strategies and payoffs. With respect to the payoff information one can distinguish between protocols that rely on average payoffs and ones that only depend on realized payoffs from a small number of matches. 

This paper studies the imitative protocol with the fewest information requirements. A player who gets a revision opportunity observes one opponent from the population at random and switches to that opponent's strategy whenever the opponent's realized payoff is higher than his or her own. This revision rule labeled \textit{imitate the better realization} (IBR) was first studied in \cite{Izquierdo2013} in the context of two-strategy games. (We elaborate on their results in Section \ref{2str}.) It gives rise to ordinal mean dynamics since it ignores the magnitudes of payoff differences, and the resulting dynamics is polynomial in strategy utilization frequencies. 

The disregard of the payoff differences deprives the dynamics of some common cardinal properties. For instance, the Nash equilibria of the base game need not be the rest points of the dynamics, and the average payoffs need not improve along the solution trajectories. At a rest point, instead of equilibrating the average payoffs of the surviving strategies, the dynamics balances the flows to and from each surviving strategy. 

Despite not being monotone in average payoffs, the dynamics still eliminates pure strategies iteratively dominated by pure strategies. Weakly dominated strategies and pure strategies dominated by mixed strategies, on the contrary, may survive. In addition, we demonstrate that strict equilbria are locally stable.  

The dynamics generated by the IBR rule in many cases qualitatively resembles the replicator dynamics, which can be derived from the \textit{pairwise proportional imitation} rule of \cite{Sch98}. Under the PPI rule a revising agent observes one opponent from the population at random and switches to that opponent's strategy \textit{at a rate proportional to the payoff advantage of that strategy}. Thus, both the IBR and the PPI rule are based on comparisons of realized payoffs, but the magnitudes of payoffs matter only under the latter rule. 

In two-strategy games the IBR dynamics and the replicator dynamics are topologically equivalent: they have the same number of rest points, and their stability and convergence properties are the same. In the Rock-Paper-Scissors games both dynamics exhibit one of the three possible behaviors: global convergence to the rest point, global convergence to the boundary, or closed orbits around the rest point\footnote{In the first two cases we make a conjecture about global behavior based on local stability analysis and simulations. In the third case we prove the statement formally.}, but these behaviors need to be the same. In other cases, for instance in Zeeman's game, the number of interior rest points the two dynamics possess is different.  

The study of ordinal imitative dynamics originated with \cite{Hof95im} and \cite{Sch98}, in which the \textit{imitate if better} (IB) dynamics -- the average payoff counterpart of \textit{imitate the better realization} -- were introduced and their main properties established. In particular, \cite{Hof95im} demonstrates that in the Rock-Paper-Scissors games the IB dynamics behaves similarly to the replicator dynamics. A surprising result is that IB need not eliminate dominated strategies. 

Finally, put in the larger context, the \textit{imitate the better realization} rule can be viewed as an analogue of the word-of-mouth communication model (Ellison and Fudenberg (1993)) for strategic environments: agents can learn about the relative merits of strategies from others' experiences, but need not be able to find out the exact advantage of a particular strategy. For instance, one can learn about a better route from a neighbor or a better mode of behavior from an elder. 
%%%%%%%%%%%%%%%%%%%%%%%%%%%%%%%%%%%%%%%%%%%%%%%%%%%%%%%%%%%%%%%%%%%%%%%%%%%%%%%%%%%%%%%%%
\section{The \textit{Imitate the Better Realization} protocol and its properties}
\noindent Suppose that a continuum of agents of mass 1 are randomly matched to play a symmetric two-player game with the payoff matrix $A$. Let $S=\{1, \ldots, n\}$ be the set of strategies, and for $i, j \in S$ let $\pi_{ij}$ be the payoff of strategy $i$ against strategy $j$. At any instant, the population state $\mathbf{x} = (x_1, x_2, \ldots, x_n)$ describes the proportion of players choosing each strategy. The set of all such population states is the $(n-1)$-dimensional simplex $\Delta = \{ \mathbf{x} \text{ } | \sum_{i=1}^n x_i =1, x_i \ge 0\}$. 

Assume that the agents do not know the structure of the game, they are not aware of all the available strategies, and they don't keep the record of the strategies they used in the past or the payoffs they received in their previous interactions. In addition, they don't know the current population state and they are not capable of correctly anticipating the way it will evolve. The only piece of information they possess and are able to retain is their current payoff, and the only way they can learn about alternative modes of behavior is by observing the strategies of others. 

The only objective of the agents compatible with these assumptions is maximizing their current payoffs. As usual in a population setting, they are only able to switch their strategies infrequently, and once a strategy revision opportunity arises, the revising agent observes one opponent from the population at random and switches to that opponent`s strategy whenever the opponent`s realized payoff is higher than his or her own. 

Effectively, the revising agent gets to compare their current payoff to a sample payoff of some other strategy without learning the circumstance under which that sample payoff was obtained. Thus agents cannot distinguish between strategies that perform better on average and favorable circumstances in which worse strategies outperform better ones. Besides, upon switching to the candidate strategy the agent's payoff may differ from the sample payoff he or she got to observe. As a result, such ``blind'' imitation may not be improving in terms of average payoffs, and yet it eliminates dominated strategies.  

Given this \textit{imitate the better realization} revision protocol, the switch rate $\rho_{ij}$ from strategy $i$ to strategy $j$ can be expressed as the probability that a payoff drawn from the $j$-th row of the payoff matrix $A$ exceeds a payoff drawn from the $i$-th row, with the population state $\mathbf{x}$ serving as the probability distribution:
\[ 
\rho_{ij}(\mathbf{x}, A) = x_j \sum_{k=1}^n x_k \sum_{m=1}^n x_m \mathbf{1}_{\{\pi_{jk}>\pi_{im}\}} \tag{IBR}
\]
The population setting offers the following interpretation: the realized payoff of a revising agent playing strategy $i$ equals $\pi_{im}$ with probability $x_m$ for $m \in S$. Such an agent would observe a payoff realization $\pi_{jk}$ for strategy $j$ with probability $x_j x_k$ for $k \in S$. Summing over $m$ and $k$ and counting the cases in which the payoff to the candidate strategy $j$ is better yields the probability of switching $\rho_{ij}(\mathbf{x}, A)$.

A stochastic process that emerges when agents in the population independently receive revision opportunities can be approximated by its mean dynamics which describes the expected change in the proportion of agents playing each strategy (\cite{BenWei03}). The mean dynamics of the process governed by imitation of the better realization is 
\[
\dot{x}_i = \sum_{j=1}^n x_j \rho_{ji}(\mathbf{x}, A) - x_i \sum_{j=1}^n \rho_{ij}(\mathbf{x}, A)
\]
\[ = x_i \sum_{j=1}^n x_j \left(\sum_{k=1}^n x_k \sum_{m=1}^n x_m (\mathbf{1}_{\{\pi_{jk}<\pi_{im}\}}-\mathbf{1}_{\{\pi_{jk}>\pi_{im}\}})\right) \tag{IBRD}
\]

Thus in general the IBR dynamics is a quartic polynomial in $n$ variables $\{x_1, x_2, \ldots, x_n\}$, but in certain cases, as shown in the next subsection, it reduces to the replicator dynamics which is cubic in $x_i$ and is the mean dynamics for a number of more information-demanding imitative rules (Imitation via pairwise comparisons of \cite{Hel92} and \cite{Sch98}, imitation driven by dissatisfaction of \cite{BjoWei96}, and imitation of success of \cite{Hof95im}, see also Section 5.4.2 in \cite{Sand}).  
 
\subsection{Relation to the replicator dynamics}
The connection between the IBR dynamics and the replicator dynamics can be established via the \textit{pairwise proportional imitation} protocol (PPI) introduced in \cite{Sch98}. Under the PPI the revising agent observes one opponent from the population at random and imitates that opponent's strategy at a rate proportional to its payoff advantage. Compared to the PPI rule based on realized payoffs, the IBR rule suppresses any payoff differences, thus under the IBR rule all conditional switch rates to strategies that exhibit higher outcomes are the same, whereas under the PPI rule the switch rate is higher the higher the outcome. 

Under the PPI rule based on \textit{realized} payoffs, an agent whose strategy $i$ yields payoff $\pi_{im}$ for some $m \in S$ switches to strategy $j$ if the observed payoff realization $\pi_{jk}$ exceeds $\pi_{im}$, with a conditional switch rate proportional to the payoff advantage of $j$, which is $[\pi_{jk}-\pi_{im}]_+$. Accounting for the likelihood of each payoff realization $\pi_{jk}$ one can express the switch rate from strategy $i$ to strategy $j$ as 
\[ 
\rho_{ij}(\mathbf{x}, A) = x_j \sum_{k=1}^n x_k \sum_{m=1}^n x_m [\pi_{jk}-\pi_{im}]_+ \tag{$PPI_R$}
\]

The PPI rule based on \textit{average} payoffs generates the switch rates in the form  
\[ 
\rho_{ij}(\mathbf{x}, A) = x_j [\pi_j-\pi_i]_+ \tag{$PPI_A$}
\]
where $x_j$ is the probability of sampling an agent whose strategy is $j$, and the term $[\pi_j-\pi_i]_+$ is the (average) payoff advantage of strategy $j$ over $i$. Due to linearity of average payoffs in strategy utilization frequencies $x_i$, both kinds of proportional imitation generate the replicator dynamics as their mean dynamics (Theorem 3 of \cite{Sch98}):
\[
\dot{x}_i = x_i \left(\pi_i - \bar{\pi}\right) \tag{RD}
\]

The difference in the conditional switch rates under the IBR and the PPI rules leads to a significant dissimilarity in the resulting mean dynamics. The rest points of the replicator dynamics are the restricted equilibria of the underlying game, so the average payoffs to all active strategies are the same and the agents have no incentives to switch between them. The rest points of the IBR dynamics, on the other hand, are ``ordinal restricted equilibria'' in which the net flow for each strategy is zero. Yet it is possible that the flows between the active strategies are positive at a rest point, and the average payoffs to strategies need not be the same. Thus in general, the sets of the rest points for the replicator and the IBR dynamics do not coincide. However, as the following proposition states, when the cardinality of the set of payoffs is low the IBR dynamics coincides with the replicator dynamics up to a constant change of speed. 

\begin{prop}
\label{payoffs}
If the payoff matrix $A$ contains only two distinct payoffs, the IBR dynamics reduces to the replicator dynamics up to a constant change of speed.
\end{prop}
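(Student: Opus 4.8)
The plan is to exploit the fact that with only two payoff values every strict comparison inside the indicators collapses into a single linear expression, after which the quartic dynamics telescopes into a cubic one of replicator form. First I would name the two distinct payoffs $a<b$ and introduce the ``high-payoff'' indicator $p_{ij}=\mathbf{1}_{\{\pi_{ij}=b\}}$, so that $\pi_{ij}=a+(b-a)p_{ij}$ for every $i,j$. The crucial observation is that with only two values the strict inequalities $\pi_{jk}>\pi_{im}$ and $\pi_{jk}<\pi_{im}$ are mutually exclusive, and each forces one entry to equal $b$ and the other $a$; concretely $\mathbf{1}_{\{\pi_{jk}>\pi_{im}\}}=p_{jk}(1-p_{im})$ and $\mathbf{1}_{\{\pi_{jk}<\pi_{im}\}}=(1-p_{jk})p_{im}$, whose difference simplifies to
\[
\mathbf{1}_{\{\pi_{jk}<\pi_{im}\}}-\mathbf{1}_{\{\pi_{jk}>\pi_{im}\}}=p_{im}-p_{jk}.
\]

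Next I would substitute this identity into (IBRD). Because $p_{im}-p_{jk}$ separates the $m$- and $k$-dependence additively and $\sum_m x_m=\sum_k x_k=1$, the inner double sum over $k$ and $m$ collapses to $\sum_m x_m p_{im}-\sum_k x_k p_{jk}$. Writing $q_i:=\sum_m x_m p_{im}$ for the probability that strategy $i$ earns the high payoff against a random opponent, the dynamics becomes $\dot{x}_i=x_i\sum_j x_j(q_i-q_j)=x_i(q_i-\bar q)$, where $\bar q:=\sum_j x_j q_j$, again using $\sum_j x_j=1$.

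Finally I would connect $q_i$ to the average payoff. Since $\pi_i=\sum_m x_m\pi_{im}=a+(b-a)q_i$ and hence $\bar\pi=a+(b-a)\bar q$, we obtain $\pi_i-\bar\pi=(b-a)(q_i-\bar q)$. Therefore the replicator field equals $(b-a)$ times the IBR field, i.e. the two vector fields are positive scalar multiples of one another (the constant $b-a>0$ since the payoffs are distinct), which is exactly equality of trajectories up to a constant change of speed.

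I do not expect a genuine obstacle here: the entire argument hinges on the single algebraic identity in the first paragraph, and once the indicator difference is reduced to $p_{im}-p_{jk}$ everything else is bookkeeping with the simplex constraint $\sum_i x_i=1$. The only point requiring a word of care is that the reduction uses strict inequalities, so ties ($\pi_{jk}=\pi_{im}$) contribute zero to both indicators and are correctly captured by $p_{im}-p_{jk}=0$ whenever the two entries share the same value; this is precisely what makes the two-payoff assumption essential, since with three or more distinct values the difference of indicators is no longer affine in the entries and the telescoping fails.
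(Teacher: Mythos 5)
Your proof is correct, but it takes a genuinely different route from the paper's. The paper normalizes the payoffs to $\{0,k\}$ (harmless, since the IBR dynamics is ordinal and the replicator dynamics is invariant to adding a constant) and then works at the level of revision protocols: from the identity $[\pi_{jk}-\pi_{im}]_+ = k\,\mathbf{1}_{\{\pi_{jk}>\pi_{im}\}}$ it concludes that every IBR switch rate is $\tfrac1k$ times the corresponding realized-payoff PPI switch rate, and then invokes Schlag's Theorem 3 (the PPI protocol generates the replicator dynamics as its mean dynamics) to finish. You instead work directly at the level of the mean dynamics: your signed identity $\mathbf{1}_{\{\pi_{jk}<\pi_{im}\}}-\mathbf{1}_{\{\pi_{jk}>\pi_{im}\}}=p_{im}-p_{jk}$ is the two-sided analogue of the paper's one-sided indicator identity, and you then carry out by hand the telescoping to $\dot{x}_i = x_i(q_i-\bar q)$ and the affine link $\pi_i-\bar\pi=(b-a)(q_i-\bar q)$. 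In effect you reprove, in this two-payoff special case, the step the paper outsources to Schlag — that switch rates proportional to realized payoff differences aggregate to the replicator dynamics. Your version buys self-containedness, an explicit closed form for the dynamics in terms of $q_i$ (the probability of earning the high payoff against a random opponent), and a transparent accounting of exactly where the two-payoff assumption enters and why ties cause no trouble; the paper's version buys brevity and makes the conceptual point central to its narrative, namely that the IBR and PPI \emph{protocols themselves}, not merely their mean dynamics, coincide up to a constant factor (protocol-level proportionality being a priori stronger than proportionality of the mean dynamics). Both arguments yield the same speed constant: your $\tfrac{1}{b-a}$ is the paper's $\tfrac1k$ under its normalization $k=b-a$.
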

\begin{proof}
Suppose WLOG that the set of payoffs is $\{ \pi_{ij} \} = \{0, k\}$ for some $k>0$. Then for any pair of strategy profiles involving the strategies $i$ and $j$ the conditional switch rates from $i$ to $j$ under the IBR rule are proportional to those under the PPI rule. For any $i,j,k,m \in \{1,2, \ldots, n\}$ the following holds: 
\[
[\pi_{jk}-\pi_{im}]_+ = k \cdot \mathbf{1}_{\{\pi_{jk}>\pi_{im}\}}, 
\]
therefore $\rho_{ij}^{PPI}(\mathbf{x}, A) = k \cdot \rho_{ij}^{IBR}(\mathbf{x}, A)$, so the IBR dynamics is the replicator dynamics scaled by $\frac{1}{k}$. 
\end{proof}

The converse of Proposition \ref{payoffs} need not be true. Example \ref{coordination} contains a game with four distinct payoffs in which both protocols generate the same dynamics. 
\begin{example}
\label{coordination}
Consider the coordination game with the payoff matrix
\[\left(
\begin{tabular}{cc}
4 & 1 \\
3 & 2 
\end{tabular}\right)\]
Both the IBR and the PPI protocols generate the same dynamics: $\dot{x} = x(1-x) (2x-1)$. Under the PPI rule an agent with payoff $\pi_{22} =2$ who observes a candidate with payoff $\pi_{11}=4$ is twice as likely to switch to strategy 1 than an agent with payoff $\pi_{21} =3$, but at the same time an agent with payoff $\pi_{12} =1$ is twice as likely to switch to strategy 2 when he or she observes a candidate with payoff $\pi_{21}=3$ rather than a candidate with payoff $\pi_{22}=2$. These higher switch rates annihilate each other, so in the end the flows between the strategies are identical to those under the IBR rule, when all switch rates upon observing a better outcome are the same. $\square$
\end{example}

Section \ref{symrps} presents another example in which the two dynamics draw closer: in the standard Rock-Paper-Scissors game the IBR dynamics can be obtained from the replicator dynamics by a positive non-constant change of speed. But Example \ref{coordination} and the standard RPS game are an exception to the general rule. 

\subsection{Payoff monotonicity and payoff positivity}
In this section it is shown that the IBR dynamics need not preserve such cardinal properties as payoff monotonicity and payoff positivity. Payoff monotonicity (\cite{Nac90}) requires that the order of growth rates be the same as the order of average payoffs (if $\pi_i > \pi_j$ then $\frac{\dot{x_i}}{x_i} > \frac{\dot{x_j}}{x_j}$). Payoff positivity (\cite{Nac90}) is a weaker requirement that a strategy have a positive growth rate if and only if its payoff is higher than the average payoff in the population. Weak payoff positivity (\cite{Wei95}) is an even weaker requirement that among the strategies with above-average payoffs there is one with a positive growth rate. The next example demonstrates that all these properties are violated for the IBR dynamics even in two-strategy games:
\begin{example}
\label{paymon}
Consider the coordination game with the payoff matrix $A$
\[A=\left(
\begin{tabular}{cc}
10 & 0 \\
3 & 3 
\end{tabular}\right)\]
Let $x$ be the frequency of the first (top) strategy in the population. The mixed strategy equilibrium in game $A$ is $x^*=0.3$, and for all $x>x^*$ the average payoff of the first strategy is higher than that of the second strategy. But the IBR dynamics for $A$ is 
\[
\dot{x} = x(1-x)(2x-1),
\] 
so for all $x<0.5$ the proportion of agents playing the first strategy is decreasing. Thus, for instance, when $x=0.4$ $\pi_1=4 > 3= \pi_2$, but $\dot{x}(0.4) < 0$, and so the IBR dynamics is neither payoff monotone nor payoff positive. $\square$
\end{example}
Depending on the payoffs, any state $x^* \in (0,1)$ can be the mixed Nash equilibrium of the game which has the same order of payoffs as game $A$ from the Example \ref{paymon}. Yet for all such games the IBR dynamics selects $x=0.5$ as the rest point, so the monotonicity and positivity properties are violated precisely at the states between $x^*$ and $0.5$. 

When the population state is between $x^*$ and $0.5$, the first strategy already has a higher average payoff yet the majority of agents playing it receive the lowest payoff and thus would treat switching to the other strategy as an improvement. The switches in the opposite direction are less likely since the agents currently playing the second strategy would only imitate the minority of strategy 1 agents who currently receive the overall highest payoff. In the remainder of the state space the vector fields of the IBR dynamics and the replicator dynamics point in the same direction. For this to happen, a strategy with a higher average payoff needs to guarantee a better payoff for a larger share of agents than the other strategy. 

This intuition also paves the way for the next result: elimination of dominated strategies. If strategy 1 dominates strategy 2, then at any interior population state there will be a positive flow from 2 to 1, and the net inflow from any other strategy would be higher for 1 than for 2. Together these effects result in the ultimate extinction of strategy 2. 

\subsection{Elimination of dominated strategies}
This section sharpens the results on elimination of dominated strategies for imitative dynamics. As established by \cite{Nac90} and \cite{SamZha92}, ``cardinal'' imitative dynamics, including the replicator dynamics, eliminate pure strategies (iteratively) dominated by other pure strategies due to payoff monotonicity. The IBR dynamics, on the contrary, is a non-monotone imitative dynamics which still eliminates such dominated strategies. In terms of the comparison between the PPI and the IBR rules, this result means that imitation alone can be sufficient for the elimination of dominated strategies. In addition, \cite{HofSan11} demonstrate that under most dynamics not based on imitation dominated pure strategies can survive, in part because the agents in a population setting may be unable to recognize dominated strategies and thus avoid them. In the case of the IBR dynamics, the available payoff information is also insufficient to identify dominated strategies, and yet the agent switch away from them in the course of the play. 

\begin{prop}
\label{dom}
If a strategy is (iteratively) dominated by another pure strategy, then it is eliminated along any interior solution of the IBR dynamics.
\end{prop}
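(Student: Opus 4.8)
The plan is to follow the intuition quoted just above the statement and to track, along an interior solution, the logarithmic ratio of the dominated strategy to a strategy that dominates it. Consider first a single round of strict dominance: let $p$ strictly dominate $q$, so that $\pi_{pm}>\pi_{qm}$ for every $m\in S$, and set $V=\log(x_q/x_p)$. Dividing (IBRD) by $x_i$ and subtracting the two expressions gives $\dot V=\frac{\dot x_q}{x_q}-\frac{\dot x_p}{x_p}=\sum_{j,k,m}x_jx_kx_m\,g_{jkm}$, where $g_{jkm}=\big(\mathbf{1}_{\{\pi_{jk}<\pi_{qm}\}}-\mathbf{1}_{\{\pi_{jk}>\pi_{qm}\}}\big)-\big(\mathbf{1}_{\{\pi_{jk}<\pi_{pm}\}}-\mathbf{1}_{\{\pi_{jk}>\pi_{pm}\}}\big)$.

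The heart of the argument is a monotonicity observation. For a fixed number $a$ the step function $t\mapsto \mathbf{1}_{\{a<t\}}-\mathbf{1}_{\{a>t\}}$ is nondecreasing, climbing from $-1$ through $0$ to $+1$ as $t$ crosses $a$. Evaluating it at $t=\pi_{qm}$ and at the strictly larger value $t=\pi_{pm}$ therefore shows $g_{jkm}\le 0$ for all $j,k,m$, so $\dot V\le 0$: the dominant strategy never loses ground in the log-ratio. To extract a strict decrease I would isolate the diagonal terms with $j=q$ and $m=k$, for which $g_{qkk}=0-(+1)=-1$, and discard the remaining nonpositive terms; this yields $\dot V\le -x_q\sum_k x_k^2\le -x_q/n$.

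From here the single-round conclusion is routine. Since $\dot V\le 0$, the function $V$ is nonincreasing and tends to a limit $L\in[-\infty,\infty)$. If $L=-\infty$ then $x_q\le x_p\,e^{V}\le e^{V}\to 0$; if $L$ is finite then $\int_0^\infty x_q\,dt\le n\,(V(0)-L)<\infty$, and because the vector field is polynomial on the compact simplex, $x_q(\cdot)$ is Lipschitz and hence uniformly continuous, so a standard Barbalat-type argument forces $x_q(t)\to 0$. Either way the dominated share vanishes.

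For iterated dominance I would induct on the elimination rounds. Suppose the strategies in a set $R$ have already been shown to satisfy $x_m(t)\to 0$, and let $q$ be strictly dominated by some $p\notin R$ in the reduced game, so $\pi_{pm}>\pi_{qm}$ only for $m\notin R$. The same computation now gives $\dot V\le -x_q\sum_{k\notin R}x_k^2+C\sum_{m\in R}x_m$: the terms with $m\in R$ have indeterminate sign but are bounded and carry a factor $x_m\to 0$, while $\sum_{k\notin R}x_k\to 1$ keeps the leading part below $-c\,x_q$ for large $t$. The main obstacle is precisely converting this perturbed inequality into $x_q\to 0$. One immediately obtains $\liminf_t x_q=0$, but upgrading this to $\lim_t x_q=0$ must rule out recurrent excursions sustained by the perturbation $C\sum_{m\in R}x_m$, which vanishes yet need not be integrable. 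The cleanest way I see to close the gap is to pass to the $\omega$-limit set of the trajectory: it is contained in the invariant face $\{x_m=0:m\in R\}$, on which (IBRD) reduces exactly to the IBR dynamics of the subgame on $S\setminus R$, since every term carrying an eliminated index drops out, so the single-round argument applies there. The delicate point, and where I expect the real work to lie, is that $\omega$-limit points may sit on still lower faces where the dominating strategy $p$ itself has vanished, so the induction must be organized around the face structure rather than around a single Lyapunov function.
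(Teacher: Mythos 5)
Your single-round argument is correct, and it is essentially the paper's own argument in cleaner clothing. The paper also works with the growth-rate difference $\dot d=\frac{\dot x_i}{x_i}-\frac{\dot x_j}{x_j}$, but gets its sign by sorting third-party payoff realizations into three bands ($P_k$: below both $\pi_{jk}<\pi_{ik}$, $Q_k$: strictly between, $R_k$: above both) plus a direct net flow $Tx_ix_j$ from the dominated to the dominating strategy, arriving at the exact identity $\dot d=2\sum_k Q_k+T(x_i+x_j)>0$ with $T\ge\sum_k x_k^2$. Your termwise step-function lemma is the same cancellation done index-by-index: the $P$ and $R$ cases are your $g_{jkm}=0$ terms, the $Q$ cases are your $g_{jkm}=-2$ terms, and the paper's $T$-term corresponds to your diagonal terms $g_{qkk}=-1$ (note you could also harvest $g_{pkk}=-1$, recovering the paper's slightly stronger bound with $x_p+x_q$ in place of $x_q$). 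Where you genuinely add value is the finish: the paper concludes $d\to+\infty$ by appeal to ``the standard method'' (citing Viossat), whereas your quantitative bound $\dot V\le -x_q\sum_k x_k^2\le -x_q/n$ together with the dichotomy ($V\to-\infty$ gives $x_q\le e^V\to 0$; $V$ bounded gives $\int_0^\infty x_q\,dt<\infty$ plus Lipschitz continuity, hence $x_q\to 0$ by Barbalat) is self-contained and correctly handles the delicate case in which $x_p$ and $x_q$ vanish together at a comparable rate — a case where ``$\dot d>0$, hence $d\to+\infty$'' is, read literally, a non sequitur.

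On the iterated step, the gap you flag is real, but you should know the paper does not close it either: its treatment is two sentences — apply the same reasoning in the restricted game, then invoke ``continuity of the IBR dynamics in the neighborhood of the edge'' corresponding to the reduced simplex — which is no more rigorous than your $\omega$-limit sketch. Your diagnosis is exactly the right one: the perturbation $C\sum_{m\in R}x_m$ vanishes but need not be integrable, so between excursions of $x_q$ the log-ratio $V$ can be pumped back up, and passing to the $\omega$-limit set runs into faces where the dominating strategy $p$ has itself vanished and $V$ degenerates. The standard repair in the literature (Samuelson--Zhang, Viossat) is to prove \emph{exponential} elimination in each round, so that later rounds face integrable perturbations; but that route is not automatic here precisely because the IBR drift bound degrades with the mass of the strategies involved ($-x_q\sum_k x_k^2$ in your version, $(x_p+x_q)\sum_k x_k^2$ in the paper's), so no uniform exponential rate falls out of the round-one argument. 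In short: your proof of the single-elimination claim matches and in fact tightens the paper's; for the iterated claim you have honestly exposed a step that the paper glosses over with a continuity appeal, and completing it — e.g., by organizing the induction around the face structure as you suggest, or by establishing rates — would go beyond what the paper itself supplies.
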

\begin{proof}
Suppose that strategy $j$ is dominated by strategy $i$. Fix $k \in \{1, \ldots, n\}$, by dominance $\pi_{ik}>\pi_{jk}$. Take a strategy $p \ne i, j$ and consider all possible strategy profiles $(p,q)$ that might arise in a match involving an agent playing this strategy. For a fixed $q \in \{1, \ldots, n\}$ the payoff $\pi_{pq}$ would fall into one of these three categories:
\begin{enumerate}
\item $\pi_{pq} \le \pi_{jk} < \pi_{ik}$, in which case there is an inflow into each strategy: $(x_p x_q x_k) x_i$ into strategy $i$ and $(x_p x_q x_k) x_j$ into $j$. Let $P_k x_i$ and $P_k x_j$ denote the total flows in this case. 
\item $\pi_{jk} < \pi_{pq} \le \pi_{ik}$, so there is outflow from strategy $j$ and inflow into strategy $i$, with total flows expressed as some $Q_k x_i$ and $Q_k x_j$.
\item $\pi_{jk} < \pi_{ik} < \pi_{pq}$, so there is outflow from both strategies, with total outflow expressed by $R_k x_i$ and $R_k x_j$.
\end{enumerate}
In addition, there is net inflow $Tx_i x_j$ from strategy $j$ to $i$ which includes at least the terms $x_k ^2 x_i x_j$.

In terms of these flow components the change in the population proportion for the strategies $i$ and $j$ can be expressed as 
\begin{align*}
\dot{x}_i &= \sum_{k=1}^n \left( P_k x_i + Q_k x_i - R_k x_i\right) + Tx_i x_j\\ 
\dot{x}_j &= \sum_{k=1}^n \left( P_k x_j - Q_k x_j - R_k x_j\right) - Tx_i x_j
\end{align*}
and so the change in the difference $d$ in growth rates of the two strategies is always positive:
\[
\dot{d}=\frac{\dot{x}_i}{x_i}-\frac{\dot{x}_j}{x_j} = 2 \sum_{k=1}^n Q_k + T(x_i+x_j) > 0
\]
Therefore, using the standard method (see, for instance, Proposition 1 in \cite{Vio15}), $d \to +\infty$ as $t \to \infty$, and thus strategy $j$ is eliminated.   

In the restricted game in which strategy $j$ is eliminated, one can apply the same reasoning to demonstrate that any strategy that becomes dominated will be eliminated as well. Thus by continuity of the IBR dynamics in the neighborhood of the edge opposite to the vertex $x_j=1$ in the original game (this edge corresponds to the simplex of the restricted game) the dynamics will select against the iteratively dominated strategies. See game $A_2$ in the example \ref{domin} for an illustration to this argument. 
\end{proof}

With a slight adjustment ($\pi_{ik}>\pi_{jk}$ would hold for at least one $k$, but not necessary all $k \in \{1, \ldots, n\}$) the argument can be applied to weakly dominated strategies as well, but one should only consider weakly dominated strategies after all strictly dominated strategies are eliminated. Otherwise, a strategy that is weakly dominated only with respect to a strictly dominated strategy may survive, as illustrated by game $A_1$ in the following example. 
\begin{example} \label{domin}
Consider\footnote{The figures in this and other examples are generated in \textit{EvoDyn-3s}. See \cite{SanIzIz18}.} the games $A_1$ and $A_2$
\begin{center}
\begin{tabular}{ccc}
$A_1=\left(
\begin{tabular}{ccc}
1 & 1 & 2 \\
1 & 1 & 1 \\
0 & 0 & 0
\end{tabular}\right)$ &&
$A_2=\left(
\begin{tabular}{ccc}
3 & 2 & 0 \\
2 & 1 & 3 \\
1 & 0 & 2
\end{tabular}\right)$
\end{tabular}
\end{center}

\begin{center}
\begin{figure}[h]
\centering
\begin{tabular}{ccc}
$\includegraphics[scale=0.5]{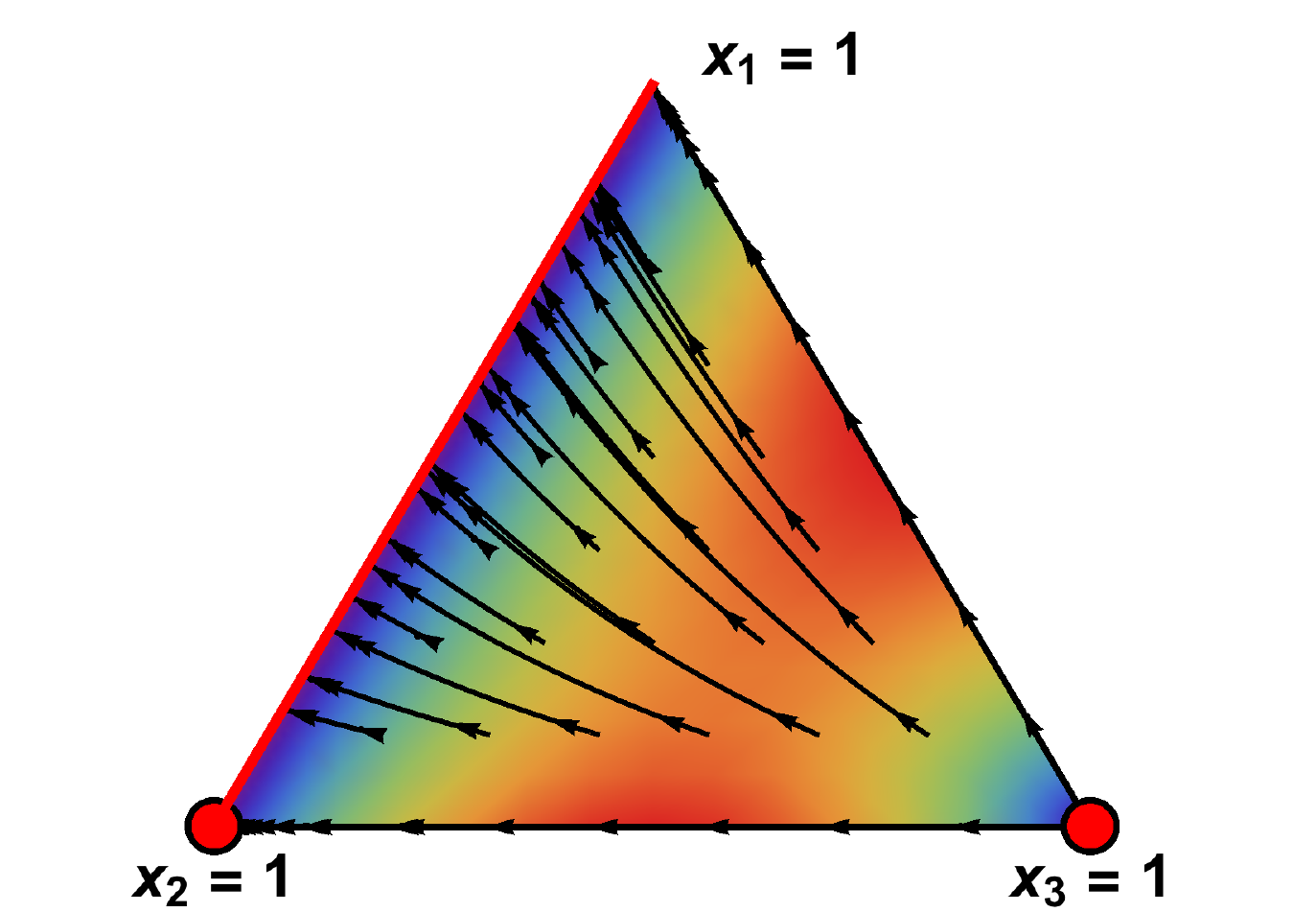}$ && $\includegraphics[scale=0.5]{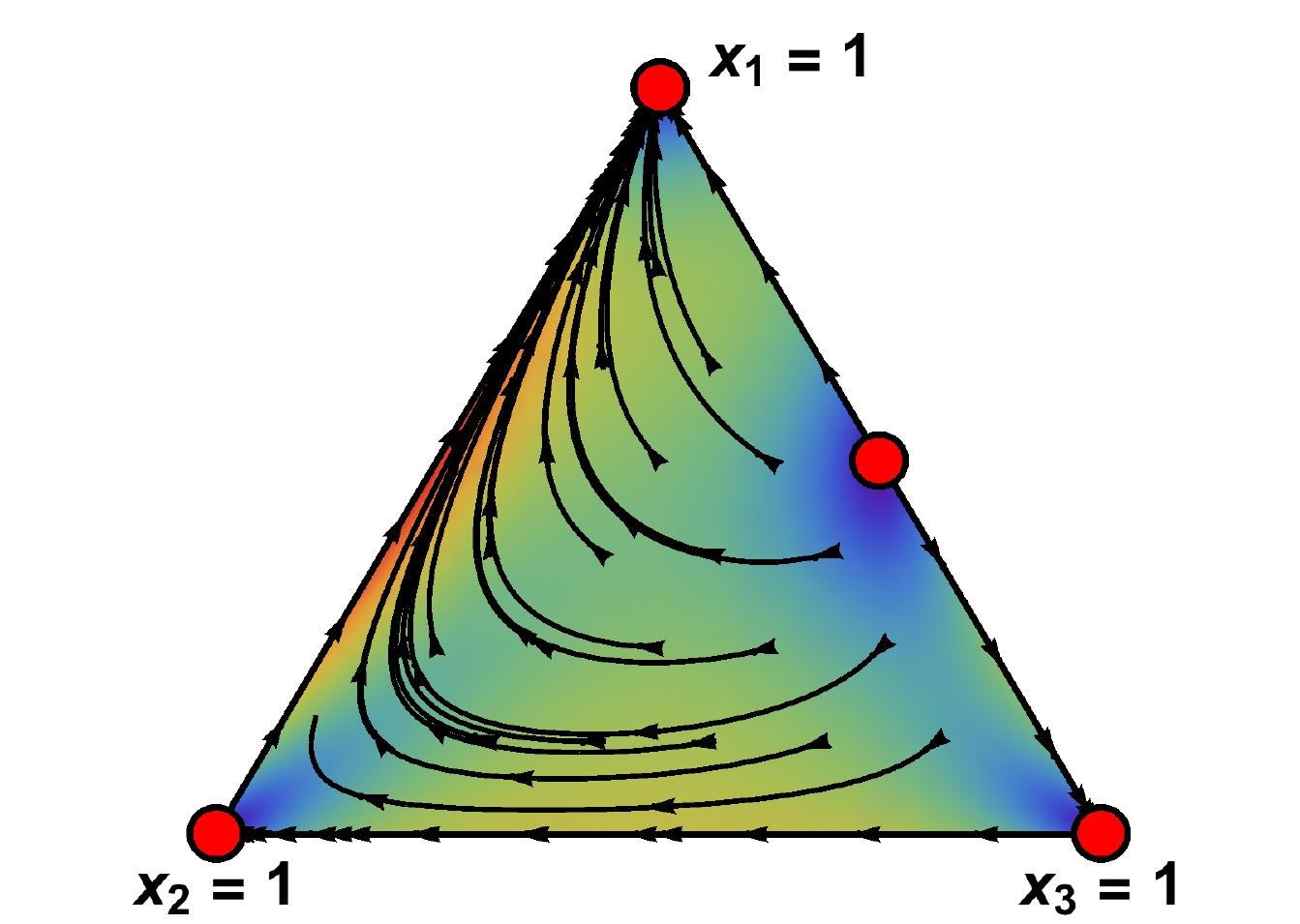}$ 
\end{tabular}
\caption{Some solution trajectories in games $A_1$ (left) and $A_2$ (right).}
\end{figure}
\end{center}

In the game $A_1$ strategy 3 is dominated by both 1 and 2, strategy 2 is weakly dominated by 1, but once strategy 3 is eliminated, both 1 and 2 coexist. 

From the perspective of an agent playing strategy 3 the remaining two strategies are equally good. The only case when strategy 1 gains advantage over strategy 2 is when an agent playing strategy 2 gets to imitate someone playing 1 against 3. The probability of observing such a candidate is $x_1x_3$, so the switch rate from strategy 2 to strategy 1 is low near the pure state $x_3=1$ and near the edge $x_3=0$. It is relatively high near the center of the simplex where the expression $x_1 x_2 x_3$ is maximized. 

In the game $A_2$ strategy 3 is dominated by strategy 2, and after strategy 3 is eliminated, strategy 2 is dominated by strategy 1. In this game the solution trajectories originating near the pure state $x_3=1$ first move in the direction of the pure state $x_2=1$, since when most agents choose strategy 3 almost no one gets to imitate strategy 1 as the majority of strategy 1 agents receive the lowest payoff. But after the population state gets sufficiently close to $x_2=1$ and the strategy 3 becomes almost extinct, agents begin to realize the advantage of 1 over 2. $\square$
\end{example}

Another example that complements the result of Proposition \ref{dom} demonstrates that a strategy dominated by mixed strategies may survive. 
\begin{example}
Consider the games $A_3$ and $A_4$ with $\alpha \in (1,4)$:  
\begin{center}
\begin{tabular}{ccc}
$A_3=\left(
\begin{tabular}{ccc}
4 & 4 & 1 \\
$\alpha$ & $\alpha$ & $\alpha$ \\
1 & 1 & 4
\end{tabular}\right)$ &&
$A_4=\left(
\begin{tabular}{ccc}
1 & 1 & 4 \\
$\alpha$ & $\alpha$ & $\alpha$ \\
4 & 4 & 1
\end{tabular}\right)$
\end{tabular}
\end{center}
Compared to game $A_3$, the order of payoffs in game $A_4$ is reversed. In both games strategy 2 is always the second best, and when $\alpha < 2.5$, it is dominated by a mixed strategy of 1 and 3.

The IBR dynamics in game $A_3$ is 
\begin{align*}
\dot{x} &= x(1-x)(1-2z)\\
\dot{y} &= y(z-x)(1-2z)\\
\dot{z} &= z(1-z)(2z-1)
\end{align*}
where $x, y$, and $z$ are the proportions of strategies 1, 2, and 3, respectively. When exactly half of the population $(z=\frac12)$ chooses strategy 3, strategy 2 can survive. In the game $A_3$, strategy 2 is otherwise eliminated: strategy 1 is better than 2 when $z<\frac12$, while 3 is better than 2 when $z>\frac12$. 
\begin{center}
\begin{figure}[h]
\centering
\begin{tabular}{ccc}
$\includegraphics[scale=0.5]{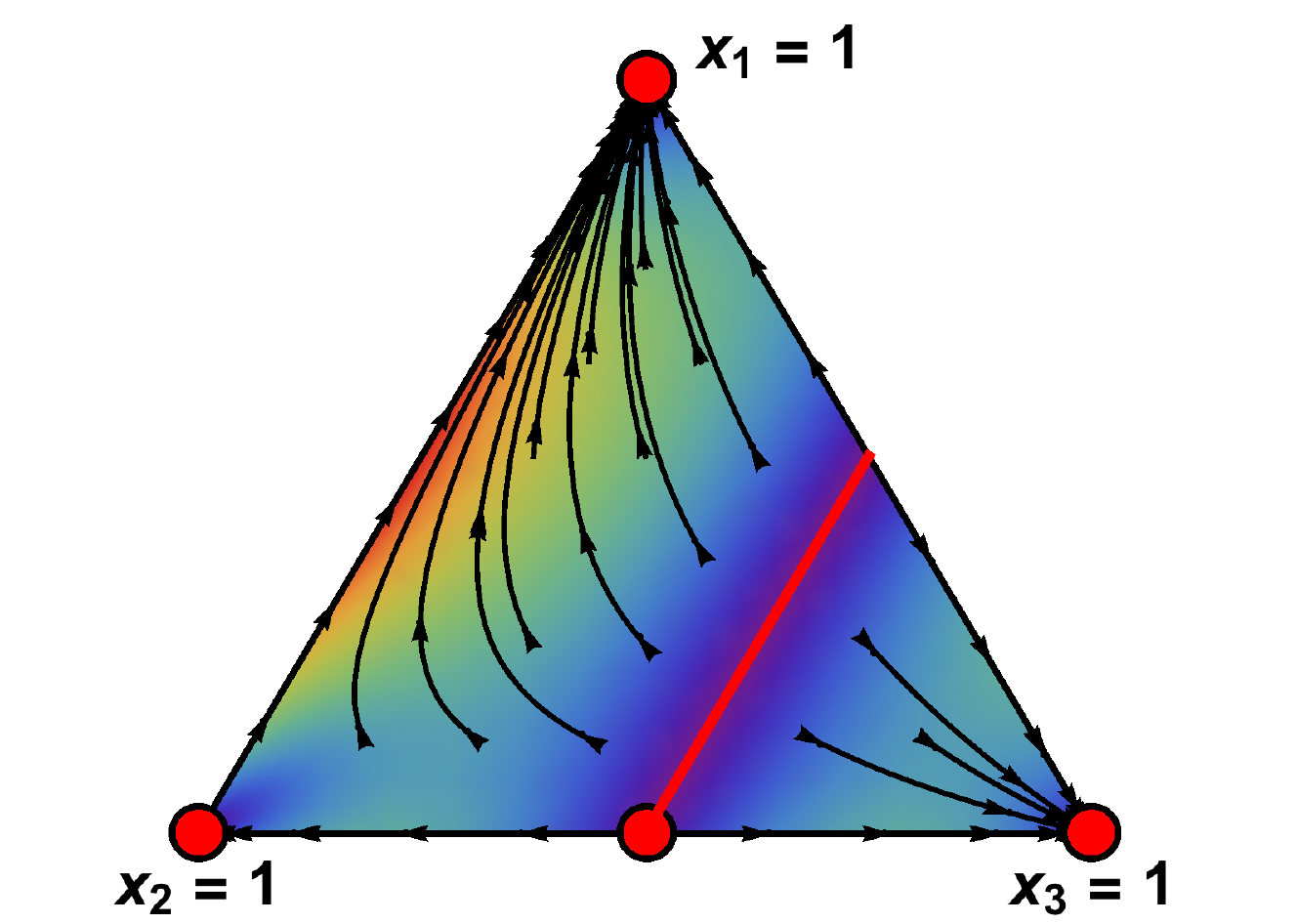}$ && $\includegraphics[scale=0.5]{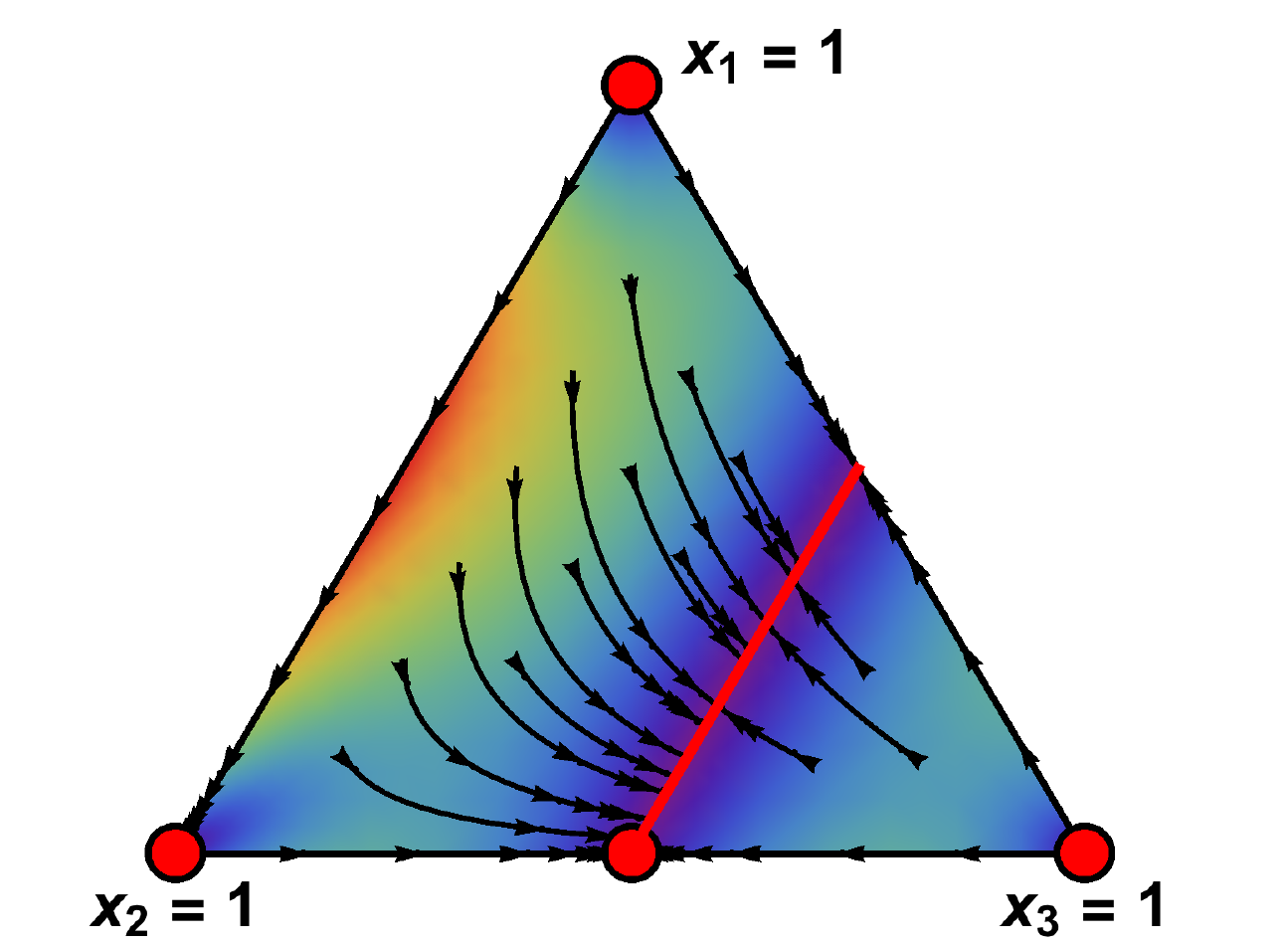}$ 
\end{tabular}
\caption{Some solution trajectories in games $A_3$ (left) and $A_4$ (right). The critical region is $z=\frac12$.}
\end{figure}
\end{center}
In game $A_4$ with the reversed order of payoffs the critical region $z=\frac12$ becomes absorbing, which suggests that strategy 2 survives along any trajectory originating in the interior of the state space.$\square$
\end{example}

\subsection{Stability of strict equilibria}
We conclude this section with a stability property for strict equilibria. If a strategy $i$ is the unique best response to itself, then in the neighborhood of a pure state $x_i=1$ any agent who currently employs a strategy $j \ne i$ would be most likely matched against an opponent playing strategy $i$, and upon receiving a revision opportunity would most likely observe a candidate earning $\pi_{ii}$ and as a consequence switch to $i$. The behavior of such agents would create an inflow into the strategy $i$ which is of higher order than any potential outflow caused by payoff advantages of other strategies over $i$, so the proportion of agents playing $i$ would increase, making the state $x_i=1$ a stable rest point. 

\setcounter{enumi}{2}
\begin{prop}
Strict symmetric equilibria are locally stable with a basin of attraction that includes all states with $x_i > 1- \tfrac{1}{\sqrt2}$.
\end{prop}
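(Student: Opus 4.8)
The plan is to establish the pointwise sign condition $\dot x_i > 0$ at every interior state with $x_i > 1 - \tfrac{1}{\sqrt2}$. Because the pure state $e_i$ is a rest point of the dynamics (IBRD), such a condition makes $x_i(t)$ strictly increasing and bounded above by $1$ along every solution that starts in this region, hence $x_i(t)\to 1$; this delivers both local stability of $e_i$ and the claimed basin in one stroke. So the whole proposition reduces to a single lower bound on $\dot x_i=\sum_{j\neq i}x_j\rho_{ji}-x_i\sum_{j\neq i}\rho_{ij}$, and the only structural inputs I would use are the strict-equilibrium inequalities $\pi_{ii}>\pi_{ji}$ for all $j\neq i$ together with the identity $\sum_{j\neq i}x_j=1-x_i=:\epsilon$.

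First I would bound the inflow from below by retaining a single matching. In $\rho_{ji}=x_i\sum_k x_k\sum_m x_m\mathbf 1_{\{\pi_{ik}>\pi_{jm}\}}$ the term $k=m=i$ has indicator $\mathbf 1_{\{\pi_{ii}>\pi_{ji}\}}=1$ by strictness, so $\rho_{ji}\ge x_i^3$ and the gross inflow is at least $x_i^3(1-x_i)$. This is the rigorous form of the heuristic preceding the statement: near $e_i$ a revising $j$-player is almost surely matched with, and observes, an $i$-player scoring $\pi_{ii}>\pi_{ji}$, and defects to $i$.

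The same strictness makes the outflow of higher order. In $\rho_{ij}=x_j\sum_k x_k\sum_m x_m\mathbf 1_{\{\pi_{jk}>\pi_{im}\}}$ the leading $k=m=i$ contribution is $x_j x_i^2\mathbf 1_{\{\pi_{ji}>\pi_{ii}\}}=0$, so every surviving summand carries an extra factor indexed off $i$ and the entire gross outflow is $O(\epsilon^2)$. Combining the two estimates gives $\dot x_i\ge x_i\epsilon\,Q(\epsilon)$ for an explicit one-variable polynomial $Q$, and the proposition becomes the scalar assertion that $Q(\epsilon)>0$ for all $\epsilon<\tfrac{1}{\sqrt2}$, i.e. on the stated region $x_i>1-\tfrac{1}{\sqrt2}$.

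The crux, and the step I expect to resist a routine treatment, is the sharpness of the outflow estimate, since $Q$ and hence the admissible range of $\epsilon$ are fixed entirely by it. Bounding each indicator difference in the outflow by its trivial value $1$ — using nothing beyond strictness — produces a $Q$ whose smallest positive root sits well inside $(0,\tfrac{1}{\sqrt2})$, so that crude accounting certifies positivity only on a proper subset of the claimed basin. To reach the full region $x_i>1-\tfrac{1}{\sqrt2}$ one must instead show that the order-$\epsilon^2$ outflow terms, each comparing a rare $j$-realization against an $i$-player paired with a rare non-$i$ opponent, cannot all be adverse simultaneously — that is, that sufficiently many of the comparisons $\mathbf 1_{\{\pi_{jk}>\pi_{im}\}}$ cancel uniformly over payoff matrices. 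I would first calibrate exactly how tight this cancellation must be on the two-strategy coordination game, whose interior rest point pins down the true edge of the basin, and only then attempt the uniform argument in the general $n$-strategy setting.
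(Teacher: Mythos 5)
Your first two paragraphs reproduce the paper's proof exactly: the inflow bound $\sum_{j\ne i}x_j\rho_{ji}\ge x_i^3(1-x_i)=(1-\epsilon)^3\epsilon$ from the $k=m=i$ term, and the worst-case (all adverse indicators equal to $1$) outflow bound $(1-\epsilon)\epsilon^2+(1-\epsilon)^2\epsilon^2$, yielding $\dot x_i\ge \epsilon(1-\epsilon)\left(1-4\epsilon+2\epsilon^2\right)>0$ for $\epsilon<1-\tfrac{1}{\sqrt2}$. The genuine gap is in your third paragraph, and it points the opposite way from what you expect. You take the stated region literally as $\epsilon<\tfrac{1}{\sqrt2}$, correctly observe that the crude bound only certifies $\epsilon<1-\tfrac{1}{\sqrt2}$, and then posit a uniform cancellation argument (which you defer rather than carry out) to close the gap. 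No such argument exists, because the region as stated in the proposition is false: the statement conflates $x_i$ with $\epsilon=1-x_i$. What is provable --- and what the paper's own proof concludes, its final line reading ``whenever $\epsilon<1-\tfrac{1}{\sqrt2}$'' --- is that the basin includes $\{x_i>\tfrac{1}{\sqrt2}\}$.

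Your own proposed calibration on two-strategy coordination games refutes the deferred program. In game $C_1$ of Table 3, with payoffs $\pi_{11}=4$, $\pi_{12}=2$, $\pi_{21}=1$, $\pi_{22}=3$, both $(1,1)$ and $(2,2)$ are strict, and the dynamics is $\dot x=x(1-x)(-2x^2+4x-1)$ with interior rest point $x=1-\tfrac{\sqrt2}{2}$. Take $i=2$ and set $\epsilon=x_1$: here $\pi_{11}>\pi_{2m}$ for all $m$ and $\pi_{21}<\pi_{12}$, so every adverse comparison in the outflow estimate is realized; one computes the inflow to be exactly $(1-\epsilon)^3\epsilon$ and the outflow exactly $(1-\epsilon)\epsilon^2+(1-\epsilon)^2\epsilon^2$, so $\dot x_2=\epsilon(1-\epsilon)(1-4\epsilon+2\epsilon^2)$ with no slack. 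The basin of $x_2=1$ is therefore exactly $\{x_2>\tfrac{1}{\sqrt2}\}$; at $x_2=0.6$, which satisfies $x_2>1-\tfrac{1}{\sqrt2}$, one has $\dot x_2<0$ and the state flows away from the strict equilibrium. So the ``crude accounting'' you dismiss is not merely sufficient but sharp --- the worst case is attained by a game in the paper's own classification --- and the correct resolution of your crux is to fix the threshold in the statement to $x_i>\tfrac{1}{\sqrt2}$, at which point your first two paragraphs already constitute a complete proof (identical in substance to the paper's, with the monotone-convergence wrapper playing the role of the paper's Lyapunov function $L(\mathbf{x})=x_i$).
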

\begin{proof}
Suppose that the strategy profile $(i,i)$ is a strict equilibrium. To show that it is locally stable it is enough to demonstrate that $\dot{x}_i > 0$ in some neighborhood of the pure state $x_i=1$ as this implies that the function $L(x) = x_i$ is a strict local Lyapunov function for the state $x_i=1$. To do so, construct the lower bound on $\dot{x}_i$ by considering a game with the lowest net inflow into the strategy $i$. 

Let $x_i = 1- \epsilon$. Since $(i,i)$ is a strict equilibrium, for any $j \ne i$ we have $\pi_{ii} > \pi_{ji}$, so strategy $i$ would be imitated by any agent who currently obtains $\pi_{ji}$ and who observes a candidate obtaining $\pi_{ii}$. Such switches to strategy $i$ create an inflow of $x_i^3 \sum_{j \ne i} x_j = (1-\epsilon)^3 \epsilon$, which is a lower bound on the inflow into strategy $i$. 

To obtain a lower bound on $\dot{x}_i$, assume that in all other cases strategy $i$ performs worse than its alternatives, i.e. $\pi_{im} < \pi_{jk}$ for any $m \in S$ and $j,k \ne i$ and $\pi_{ik} < \pi_{ji}$ for any $j,k \ne i$. In the former case the outflow from strategy $i$ is $\sum_{m=1}^n x_i x_m \sum_{j,k \ne i} x_j x_k = (1-\epsilon) \epsilon^2$, and in the latter it is $\sum_{k \ne i} x_i x_k \sum_{j \ne i} x_j x_i = (1-\epsilon)^2 \epsilon^2$. The sum of these two components is the upper bound on the outflow from strategy $i$. 

Subtracting the highest outflow from the lowest inflow yields the desired lower bound:
\begin{align*}
\dot{x}_i &= \sum_{j=1}^n x_j \rho_{ji}(\mathbf{x}, A) - x_i \sum_{j=1}^n \rho_{ij}(\mathbf{x}, A) \\
	&= x_i \sum_{j=1}^n x_j \sum_{k=1}^n x_k \sum_{m=1}^n x_m \mathbf{1}_{\{\pi_{jk}<\pi_{im}\}} - x_i \sum_{j=1}^n x_j \sum_{k=1}^n x_k \sum_{m=1}^n x_m \mathbf{1}_{\{\pi_{jk}>\pi_{im}\}} \\
	& \ge x_i^3 \sum_{j \ne i} x_j \mathbf{1}_{\{\pi_{ji}<\pi_{ii}\}} - x_i \sum_{m=1}^n x_m \sum_{j,k \ne i} x_j x_k \mathbf{1}_{\{\pi_{jk}>\pi_{im}\}} - x_i^2 \sum_{k \ne i} x_k \sum_{j \ne i} x_j \mathbf{1}_{\{\pi_{ik}<\pi_{ji}\}}
\end{align*}
\begin{align*}
	& \ge (1-\epsilon)^3 \epsilon - (1-\epsilon) \epsilon^2-(1-\epsilon)^2 \epsilon^2\\
	& \ge \epsilon (1-\epsilon) \left(1-4\epsilon+2\epsilon^2\right) \\
	& \ge 0 \text{ when } \epsilon< 1- \tfrac{1}{\sqrt2}.
\end{align*}
Thus whenever $\epsilon< 1- \tfrac{1}{\sqrt2}$ the trajectory from an initial condition with $x_i=1-\epsilon$ converges to $x_i=1$, so any strict symmetric equilibrium is locally stable.  
\end{proof}
%%%%%%%%%%%%%%%%%%%%%%%%%%%%%%%%%%%%%%%%%%%%%%%%%%%%%%%%%%%%%%%%%%%%%%%%%%%%%%%%%%%%%%%%%%%
%%%%%%%%%%%%%%%%%%%%%%%%%%%%%%%%%%%%%%%%%%%%%%%%%%%%%%%%%%%%%%%%%%%%%%%%%%%%%%%%%%%%%%%%%%%
\section{Two-strategy games} \label{2str}
In this section we completely characterize the behavior of the IBR dynamics for two-strategy games. This topic was first studied in \cite{Izquierdo2013}, in which the dynamics is used to approximate the behavior of a finite population of agents who employ the IBR rule in the Hawk-Dove game. \cite{Izquierdo2013} also derive the general IBR equation for two-strategy games. Our paper complements their findings by identifying all possible rest points of the dynamics, and by demonstrating its equivalence to the replicator dynamics in terms of the number of rest points and the local behavior around them. 

For two-strategy games with two distinct payoffs the IBR dynamics and the replicator dynamics coincide. The table below presents all possible cases with 3 or 4 distinct payoffs, grouped by the game type: D -- game with a dominant strategy, W -- game with a weakly dominant strategy, C -- coordination game, A -- anticoordination game. 
\begin{center}
\begin{tabular}{cccccc}
$D_1$ & $D_2$ & $D_3$ & $D_4$ & $D_5$ & $D_6$ \\
$\left(
\begin{tabular}{cc}
4 & 3 \\
2 & 1 
\end{tabular}\right)$
&
$\left(
\begin{tabular}{cc}
4 & 3 \\
1 & 2 
\end{tabular}\right)$&
$\left(
\begin{tabular}{cc}
4 & 2 \\
3 & 1 
\end{tabular}\right)$
&
$\left(
\begin{tabular}{cc}
3 & 4 \\
2 & 1 
\end{tabular}\right)$
&
$\left(
\begin{tabular}{cc}
3 & 4 \\
1 & 2 
\end{tabular}\right)$
&
$\left(
\begin{tabular}{cc}
2 & 4 \\
1 & 3 
\end{tabular}\right)$\\
\\
$D_7$ & $D_8$ & $D_9$ & $D_{10}$ & $D_{11}$ & $D_{12}$ \\
$\left(
\begin{tabular}{cc}
3 & 3 \\
2 & 1 
\end{tabular}\right)$
&
$\left(
\begin{tabular}{cc}
3 & 3 \\
1 & 2 
\end{tabular}\right)$
&
$\left(
\begin{tabular}{cc}
3 & 2 \\
2 & 1 
\end{tabular}\right)$
&
$\left(
\begin{tabular}{cc}
2 & 3 \\
1 & 2 
\end{tabular}\right)$
&
$\left(
\begin{tabular}{cc}
3 & 2 \\
1 & 1 
\end{tabular}\right)$
&
$\left(
\begin{tabular}{cc}
2 & 3 \\
1 & 1 
\end{tabular}\right)$\\
\\
$W_1$ & $W_2$ & $W_3$ & $W_4$ & $W_5$ & $W_6$\\
$\left(
\begin{tabular}{cc}
3 & 2 \\
3 & 1 
\end{tabular}\right)$
&
$\left(
\begin{tabular}{cc}
2 & 3 \\
1 & 3 
\end{tabular}\right)$
&
$\left(
\begin{tabular}{cc}
3 & 2 \\
1 & 2 
\end{tabular}\right)$
&
$\left(
\begin{tabular}{cc}
2 & 3 \\
2 & 1 
\end{tabular}\right)$
&
$\left(
\begin{tabular}{cc}
3 & 1 \\
2 & 1 
\end{tabular}\right)$
&
$\left(
\begin{tabular}{cc}
1 & 3 \\
1 & 2 
\end{tabular}\right)$\\
\\
$C_1$ & $C_2$ & $C_3$ & $C_4$ & $C_5$ & $C_6$ \\
$\left(
\begin{tabular}{cc}
4 & 2 \\
1 & 3 
\end{tabular}\right)$
&
$\left(
\begin{tabular}{cc}
4 & 1 \\
3 & 2 
\end{tabular}\right)$
&
$\left(
\begin{tabular}{cc}
4 & 1 \\
2 & 3 
\end{tabular}\right)$
&
$\left(
\begin{tabular}{cc}
3 & 2 \\
1 & 3 
\end{tabular}\right)$
&
$\left(
\begin{tabular}{cc}
3 & 1 \\
2 & 2 
\end{tabular}\right)$
&
$\left(
\begin{tabular}{cc}
3 & 1 \\
1 & 2 
\end{tabular}\right)$\\
\\
$A_1$ & $A_2$ & $A_3$ & $A_4$ & $A_5$ & $A_6$\\
$\left(
\begin{tabular}{cc}
2 & 4 \\
3 & 1 
\end{tabular}\right)$
&
$\left(
\begin{tabular}{cc}
1 & 4 \\
3 & 2 
\end{tabular}\right)$
&
$\left(
\begin{tabular}{cc}
1 & 4 \\
2 & 3 
\end{tabular}\right)$
&
$\left(
\begin{tabular}{cc}
2 & 3 \\
3 & 1 
\end{tabular}\right)$
&
$\left(
\begin{tabular}{cc}
1 & 3 \\
2 & 2 
\end{tabular}\right)$
&
$\left(
\begin{tabular}{cc}
1 & 3 \\
2 & 1 
\end{tabular}\right)$
\end{tabular}
\captionof{table}{types of two-strategy games with 3 or 4 distinct payoffs}
\end{center}
In games with a (weakly) dominant strategy the set of rest points is $\{0,1\}$, and trajectories from any interior state converge to state $x=1$. 

\begin{center}
\begin{tabular}{|l|l|}
\hline
types & mean dynamics\\
\hline
$D_1, D_2, D_4, D_5, D_7, D_8, D_{11}, D_{12}$ & $\dot{x} = x(1-x)$\\
$D_3, D_6$ & $\dot{x} = x(1-x)[x^2+(1-x)^2]$\\
$D_9$ & $\dot{x} = x(1-x)[x+(1-x)^2]$\\
$D_{10}$ & $\dot{x} = x(1-x)[x^2+(1-x)]$\\
$W_1, W_6$ &  $\dot{x} = x(1-x)^3$\\
$W_2, W_5$ &  $\dot{x} = x^3(1-x)$\\
$W_3$ &  $\dot{x} = x^2(1-x)(1+x)$\\
$W_4$ &  $\dot{x} = x(1-x)^2(1+x)$\\
\hline
\end{tabular}
\captionof{table}{the mean dynamics for the games with (weakly) dominant strategies}
\end{center}
In coordination games the set of rest points is $\{0, 1-\frac{\sqrt{2}}{2}, \frac{3-\sqrt{5}}{2}, \frac12, 1\}$. The interior rest points are repelling, and trajectories from the interior states converge to the boundary states. 

\begin{center}
\begin{tabular}{|l|l|l|}
\hline
types & mean dynamics & interior RP\\
\hline
$C_1$ & $\dot{x} = x(1-x)[-2x^2+4x-1]$ & $x=1-\frac{\sqrt{2}}{2} \approx 0.293$\\
$C_2, C_3, C_5$ & $\dot{x} = x(1-x)[2x-1]$ & $x=\frac12$\\
$C_4, C_6$ & $\dot{x} = x(1-x)[-x^2+3x-1]$ & $x= \frac{3-\sqrt{5}}{2} \approx 0.382$\\
\hline
\end{tabular}
\captionof{table}{the mean dynamics for coordination games}
\end{center}
In anticoordination games the set of rest points is $\{0, \frac12, \frac{\sqrt{5}-1}{2}, \frac{1}{\sqrt{2}}, 1\}$. Essentially, the rest points in coordination and anticoordination games are the same save for the order of the strategies. The trajectories from the interior states converge to the interior rest point. 

\begin{center}
\begin{tabular}{|l|l|l|}
\hline
types & mean dynamics & interior RP\\
\hline
$A_1$ & $\dot{x} = x(1-x)[1-2x^2]$ & $x=\frac{1}{\sqrt{2}} \approx 0.707$\\
$A_2, A_3, A_5$ & $\dot{x} = x(1-x)[1-2x]$ & $x=\frac12$\\
$A_4, A_6$ & $\dot{x} = x(1-x)[-x^2-x+1]$ & $x= \frac{\sqrt{5}-1}{2} \approx 0.618$\\
\hline
\end{tabular}
\captionof{table}{the mean dynamics for anticoordination games}
\end{center}

Thus, the IBR dynamics exhibits the same properties as the replicator dynamics: in games with a dominant strategy both dynamics select it; in coordination games trajectories from the interior converge to one of the two pure states, while in anticoordination games trajectories converge to the interior rest point. 

As a corollary, this characterization also describes the behavior of the IBR dynamics on the boundary of the state space in a three-strategy game, since once one of the three strategies becomes extinct it is never reintroduced. Thus each boundary of the two-dimensional simplex can only have at most one rest point, unless the whole boundary is the rest area (which is possible in degenerate cases). Besides, this characterization shows that the interior rest points of the dynamics can only take one of the five possible values, and suggests that similar sets of rest points can be identified for games with more strategies.  
%%%%%%%%%%%%%%%%%%%%%%%%%%%%%%%%%%%%%%%%%%%%%%%%%%%%%%%%%%%%%%%%%%%%%%%%%%%%%%%%%%%%%%%%%%%
%%%%%%%%%%%%%%%%%%%%%%%%%%%%%%%%%%%%%%%%%%%%%%%%%%%%%%%%%%%%%%%%%%%%%%%%%%%%%%%%%%%%%%%%%%%
\section{Rock-Paper-Scissors games}
\subsection{Symmetric RPS games}  \label{symrps}
First consider the symmetric RPS game with the payoff matrix $A$ and $a,b>0$. 
\[A=\left(
\begin{tabular}{ccc}
0 & $-a$ & $b$ \\
$b$ & 0 & $-a$ \\
$-a$ &$b$ & 0
\end{tabular}\right)\]
For any values of the payoff parameters the game $A$ induces the same order of payoffs as the standard RPS game, for which the replicator dynamics is 
\begin{align*}
\dot{x} &= x(z-y),\\
\dot{y} &= y(x-z),\\
\dot{z} &= z(y-x),
\end{align*}
where $x$, $y$, and $z$ are the shares of agents playing Rock, Paper, and Scissors, respectively. 

The mean dynamics in game $A$ generated by the IBR protocol is 
\begin{align*} 
\dot{x} &= x(z-y)(1-xy-xz-yz),\\
\dot{y} &= y(x-z)(1-xy-xz-yz), \\ 
\dot{z} &= z(y-x)(1-xy-xz-yz)
\end{align*}

Thus in the standard RPS game the IBR dynamics is the replicator dynamics with speed adjusted by the positive non-constant function $(1-xy-xz-yz)$. This relationship helps identify the global behavior of the IBR dynamics in symmetric RPS games. 

\begin{prop}
In all symmetric RPS games the trajectories under the IBR dynamics are closed orbits around the unique interior rest point $(\frac13, \frac13, \frac13)$.
\end{prop}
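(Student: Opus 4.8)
The plan is to exploit the fact that the \emph{imitate the better realization} rule is ordinal, so the dynamics depends only on the order of the payoffs and not on their magnitudes. Every symmetric RPS game with $a,b>0$ induces the same payoff order $-a<0<b$, with the three values occupying fixed positions in $A$; hence every indicator $\mathbf{1}_{\{\pi_{jk}>\pi_{im}\}}$ in (IBRD) takes the same value across all such games, and the IBR dynamics is literally one and the same vector field for all $a,b$. That field is the system displayed above, $\dot{\mathbf{x}} = g(\mathbf{x})F(\mathbf{x})$, where $g(\mathbf{x}) = 1-xy-xz-yz$ and $F$ is the standard RPS replicator field. Consequently it suffices to analyze this single system; in particular no separate treatment of the non-zero-sum case $a\ne b$ is required, and the conclusion will not rest on any zero-sum hypothesis about the underlying game.

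First I would record two structural facts. Writing $(x+y+z)^2=1$ gives $xy+xz+yz=\tfrac12\bigl(1-(x^2+y^2+z^2)\bigr)\le\tfrac13$, so $g\ge\tfrac23>0$ on all of $\Delta$. Because $g>0$ in the interior, an interior rest point must satisfy $z-y=x-z=y-x=0$, i.e. $x=y=z$, so the center $(\tfrac13,\tfrac13,\tfrac13)$ is the unique interior rest point; the faces $x=0$, $y=0$, $z=0$ are invariant since the corresponding coordinate's velocity vanishes there.

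Next I would exhibit a first integral. A direct computation gives $\frac{d}{dt}\ln(xyz)=\frac{\dot x}{x}+\frac{\dot y}{y}+\frac{\dot z}{z}=g\bigl[(z-y)+(x-z)+(y-x)\bigr]=0$, so $V=xyz$ is conserved along every IBR trajectory. I want to stress that this is verified \emph{directly} for the IBR field $gF$: the common factor $g$ pulls out of the sum of logarithmic derivatives and the replicator bracket telescopes to zero, so conservation of $V$ holds for all $a,b$ and uses nothing about payoff magnitudes or zero-sumness of the game under study. (Equivalently, since $gF$ is a strictly positive rescaling of the classical RPS replicator field, the two have identical orbits, and $xyz$ is the textbook constant of motion of that classical flow.) For each $c\in(0,\tfrac1{27})$ the level set $\{V=c\}$ inside $\Delta$ is a simple closed curve encircling the center, since $V$ has a unique interior critical point, a nondegenerate maximum at the center with value $\tfrac1{27}$, and vanishes on the boundary. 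Each interior trajectory is therefore trapped on one such curve, stays in the interior (the boundary is invariant and $V$ is bounded away from $0$ there), and meets no rest point. A nonstationary orbit confined to a compact, equilibrium-free level curve is periodic, for instance by Poincar\'e--Bendixson applied in the two-dimensional simplex, or directly as a fixed-point-free flow on a circle, which yields the claimed closed orbits around the center.

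The step that most needs care is the passage from \emph{orbit lies on a closed curve} to \emph{orbit is a genuine periodic cycle traversed in finite time}. Two points must be nailed down. First, that each regular level set is a single simple closed curve rather than several components; I would justify this from the Morse structure of $V=xyz$ on $\Delta$ (one interior maximum, vanishing on the boundary of a disk). Second, that the period is finite: the speed $g\,|F|$ is continuous and strictly positive on the compact level curve, hence bounded below, so the traversal time is finite. The time change $ds=g\,dt$ makes this transparent, as the IBR orbit and the replicator orbit are the same set, and since $g\ge\tfrac23$ the IBR period is at most $\tfrac32$ times the (finite) replicator period.
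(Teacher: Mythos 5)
Your proof is correct and follows essentially the same route as the paper: you observe that the IBR field is a strictly positive speed-rescaling $g\,F$ of the standard RPS replicator field (valid for all $a,b>0$ by ordinality) and then use the conserved quantity $V=xyz$, which is a monotone transform of the paper's Lyapunov function $H(\mathbf{x})=\sum_i x_i^*\log\tfrac{x_i}{x_i^*}=\tfrac13\log(xyz)+\log 3$ borrowed from Zeeman. The only difference is one of detail, not of approach: you make explicit what the paper leaves implicit --- uniqueness of the interior rest point via $g\ge\tfrac23>0$, the regular level sets of $V$ being single simple closed curves, and finiteness of the period from the speed bound --- which tightens rather than replaces the paper's argument.
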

\begin{proof}
Clearly, the only interior solution to the system (RPS) is $x^*=y^*=z^*=\frac13$. 

Since the IBR dynamics is the speed-adjusted replicator dynamics for this game, the Lyapunov function $H(\mathbf{x}) = x^* \log \frac{x}{x^*} + y^* \log \frac{y}{y^*} + z^* \log \frac{z}{z^*}$ (introduced in Theorem 6 in \cite{Zee80}) would be constant along the solutions of the IBR dynamics for all symmetric RPS games.
\end{proof}

Intuitively, in the standard RPS game the average payoff to a strategy (the information about the candidate strategy that a player receives under the proportional imitation rule based on the average payoffs) is equivalent to learning about the difference in shares of winners and losers under that strategy. So the average payoff under the PPI rule is higher whenever the likelihood of switching under the IBR rule is higher.    

\subsection{Ordered RPS games}
In general under the replicator dynamics the behavior of the system in the Rock-Paper-Scissors game solely depends on the determinant of the payoff matrix $A$. If $\det A = 0$, the solution trajectories form closed orbits around the interior steady state. If $\det A >0$, the interior steady state is a global attractor, whereas if $\det A < 0$ it is repelling (\cite{Zee80}).

Under the IBR dynamics we conjecture\footnote{We provide the proof of that statement for the closed orbits case, and state it as a conjecture for the remaining two cases based on simulations.} that the global behavior of the system can be one of the same three types: either all solutions converge to the interior steady state, or form closed orbits around it, or converge to the boundary. The difference is, the behavior depends on the order of payoffs, so for a fixed RPS game one can have any combination of behaviors under the replicator and the IBR dynamics. 

In Table 5 we consider the nine possible orderings over the payoffs in the RPS game. In all cases \textit{Rock} yields the highest positive payoff. 
\begin{center}
\begin{tabular}{ccc}
%%%%%%%%%%%%%%%%%%%%%%%%%%%%%%%%%%%%
$A_1=\left(
\begin{tabular}{ccc}
0 & -3 & 3 \\
2 & 0 & -2 \\
-1 & 1 & 0
\end{tabular}\right)$
&
$A_2=\left(
\begin{tabular}{ccc}
0 & -2 & 3 \\
2 & 0 & -1 \\
-3 & 1 & 0
\end{tabular}\right)$
&
$A_3=\left(
\begin{tabular}{ccc}
0 & -1 & 3 \\
2 & 0 & -3 \\
-2 & 1 & 0
\end{tabular}\right)$\\
\\
%%%%%%%%%%%%%%%%%%%%%%%%%%%%%%%%%%%%
$B_1=\left(
\begin{tabular}{ccc}
0 & -3 & 3 \\
1 & 0 & -1 \\
-2 & 2 & 0
\end{tabular}\right)$
&
$B_2=\left(
\begin{tabular}{ccc}
0 & -1 & 3 \\
1 & 0 & -2 \\
-3 & 2 & 0
\end{tabular}\right)$
&
$B_3=\left(
\begin{tabular}{ccc}
0 & -2 & 3 \\
1 & 0 & -3 \\
-1 & 2 & 0
\end{tabular}\right)$\\
\\
%%%%%%%%%%%%%%%%%%%%%%%%%%%%%%%%%%%%
$C_1=\left(
\begin{tabular}{ccc}
0 & -1 & 3 \\
2 & 0 & -2 \\
-3 & 1 & 0
\end{tabular}\right)$
&
$C_2=\left(
\begin{tabular}{ccc}
0 & -3 & 3 \\
2 & 0 & -1 \\
-2 & 1 & 0
\end{tabular}\right)$
&
$C_3=\left(
\begin{tabular}{ccc}
0 & -2 & 3 \\
2 & 0 & -3 \\
-1 & 1 & 0
\end{tabular}\right)$\\
\end{tabular}
\captionof{table}{The interior steady state is repelling in A1-A3, an attractor in B1-B3, and a center in C1-C3.}
\end{center}
Each B game is obtained from an A game with the same index by reversing the order of payoffs and subsequently relabelling the strategies. This procedure reverses the flows along the solution trajectories, so the repelling rest points in games of type A become attractors in games of type B. The next proposition states this result formally. 
\begin{center}
\begin{figure}[h]
\centering
\begin{tabular}{ccc}
$\includegraphics[scale=0.42]{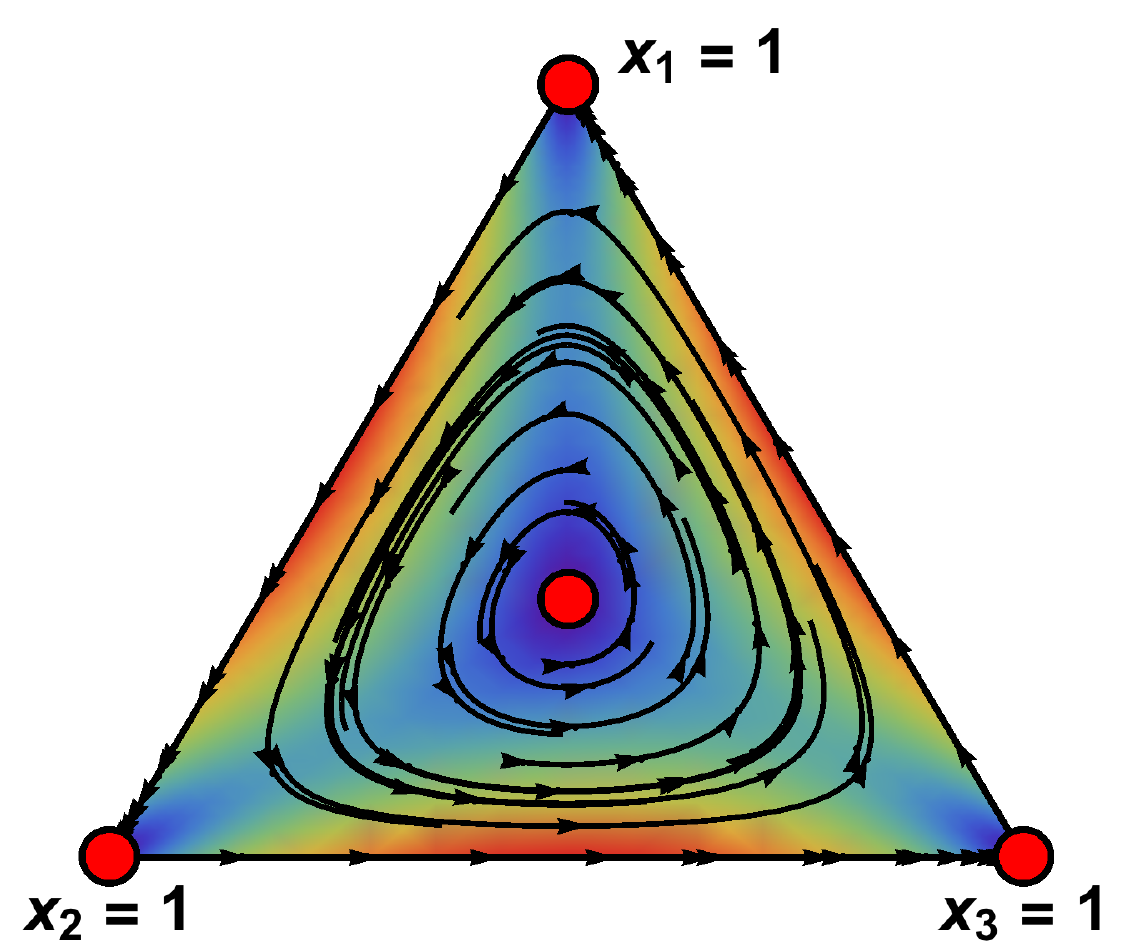}$ & $\includegraphics[scale=0.42]{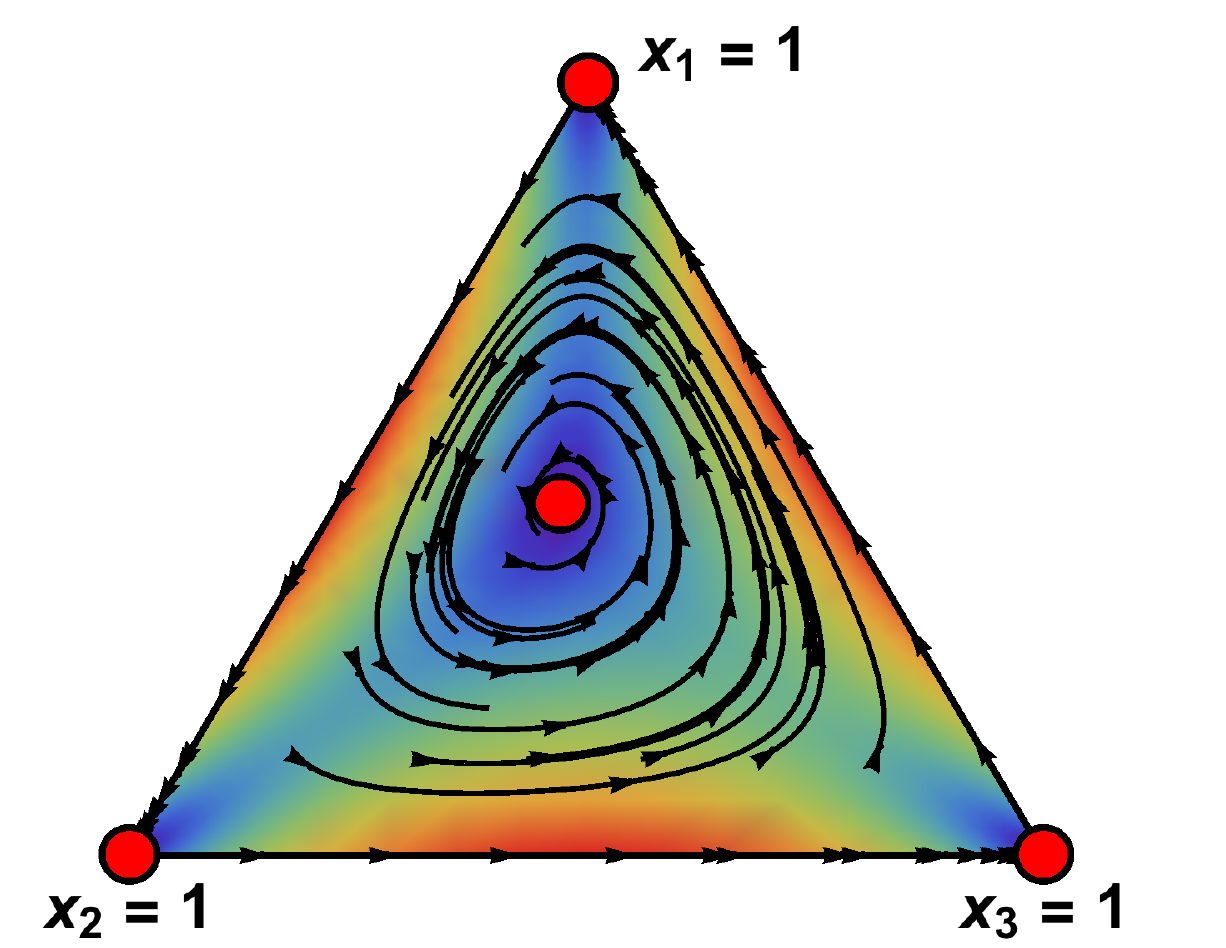}$ & $\includegraphics[scale=0.42]{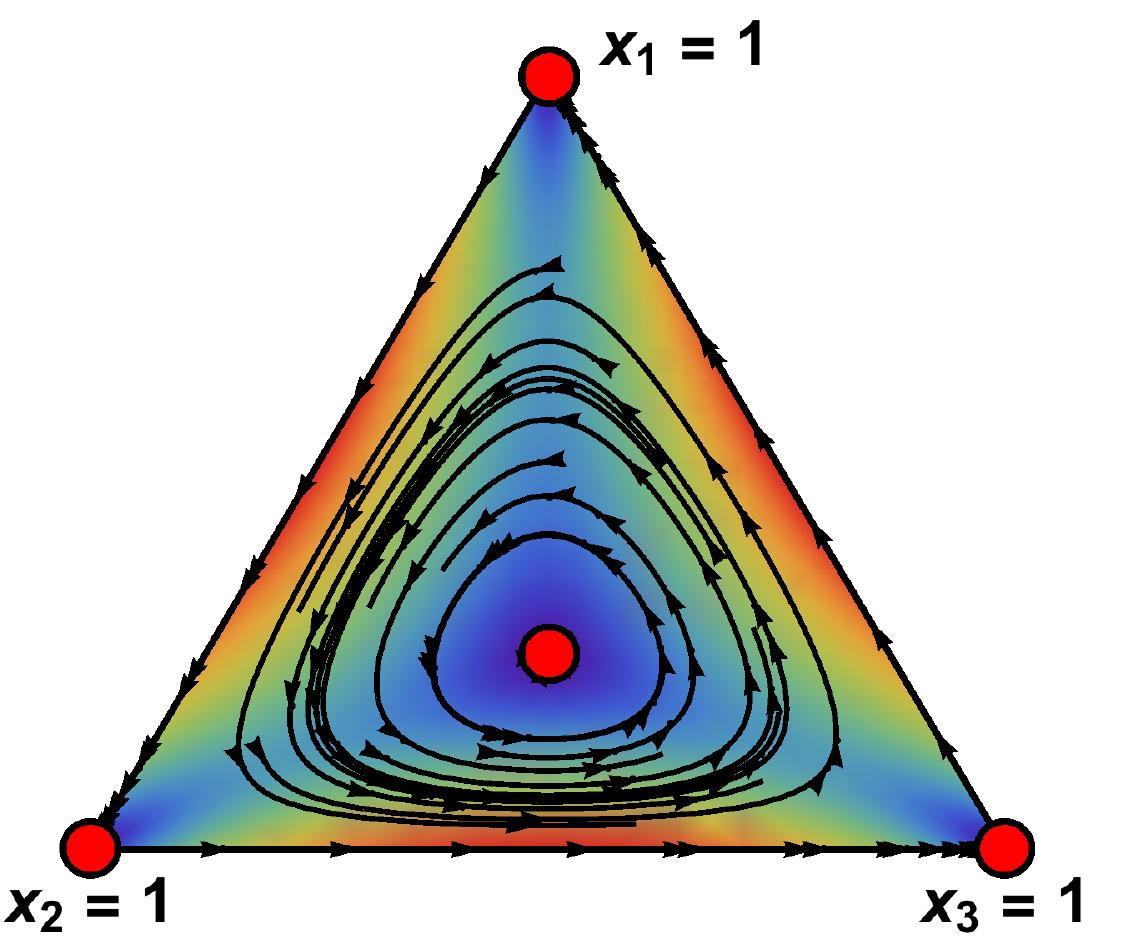}$
\end{tabular}
\caption{Some solution trajectories for games A1, B3, and C2.}
\end{figure}
\end{center}

\begin{prop}
i) The unique interior rest point in any A game is repelling.\\
ii) For any $i \in \{1,2,3\}$ the game $B_i$ can be obtained from the game $(-A_i)$ by relabeling the strategies.  
\end{prop}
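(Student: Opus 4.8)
The plan is to treat the two parts separately: I would settle (ii) by a direct relabeling check and (i) by a local linearization at the interior rest point.

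For part (ii), I would exhibit the relabeling explicitly. Negating the payoff matrix reverses every strict comparison $\pi_{jk}\gtrless\pi_{im}$, so it reverses the orientation of the cyclic dominance; since relabeling by a transposition (an odd permutation) also reverses that orientation, composing negation with a suitable transposition restores the standard Rock--Paper--Scissors orientation, and the particular transposition can be chosen to match magnitudes as well. Concretely, I would take $\sigma_i$ to be the transposition fixing strategy $i$ and swapping the other two, so that $\sigma_1=(2\,3)$, $\sigma_2=(1\,3)$, $\sigma_3=(1\,2)$, and check for each $i$ that the matrix with entries $(-A_i)_{\sigma_i(p)\sigma_i(q)}$ equals $B_i$. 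Since the off-diagonal entries of $-A_i$ and of $B_i$ form the same multiset $\{\pm1,\pm2,\pm3\}$ and both diagonals vanish, this is a finite entry-by-entry verification.

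For part (i), I would first rewrite the quartic field as $\dot x_i=x_i\,g_i(\mathbf x)$ on $\Delta$. Letting $W_{ij}(\mathbf x)$ be the probability that the realized payoff of strategy $i$ exceeds that of strategy $j$ -- a homogeneous quadratic read off from the comparisons in $A_i$ -- and $D_{ij}=W_{ij}-W_{ji}$, the growth rate is the cubic $g_i=\sum_j x_j D_{ij}$, and the interior rest point $\mathbf x^\ast$ is the unique interior zero of $(g_1,g_2,g_3)$. Evaluating at the barycenter reduces each $g_i$ to a rank count of the entries of $A_i$, which vanishes for $A_1$ (whose rows are pairs $\{c,-c\}$), giving $\mathbf x^\ast=(\tfrac13,\tfrac13,\tfrac13)$, but not for $A_2,A_3$, whose rest points I would obtain by solving the cubic system. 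I would then eliminate $z=1-x-y$, form the $2\times2$ Jacobian $J$ of the reduced field at $\mathbf x^\ast$, and show $\operatorname{tr}J>0$ and $\det J>0$; together these force both eigenvalues to have positive real part, so $\mathbf x^\ast$ is repelling. The inequality $\det J>0$ (equivalently, index $+1$, ruling out a saddle) can be supplied independently by the heteroclinic Rock--Paper--Scissors cycle on $\partial\Delta$ through the Poincar\'e--Hopf theorem.

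The main obstacle is the algebra of part (i): assembling the explicit cubics $g_i$, locating the non-central rest points of $A_2$ and $A_3$, and signing $\operatorname{tr}J$ there. To avoid pinning down $\mathbf x^\ast$ I would instead try a Dulac argument, seeking a positive weight $B(\mathbf x)$ -- a power of $x_1x_2x_3$ being the natural first guess -- for which $\nabla\!\cdot(Bf)$ is everywhere positive on the interior of the reduced system. Since $f(\mathbf x^\ast)=0$ and $B>0$, such a bound yields $\operatorname{tr}J=\nabla\!\cdot f(\mathbf x^\ast)>0$ without solving for $\mathbf x^\ast$, and at the same time excludes closed orbits by Bendixson--Dulac; combined with the index-$+1$ fact this again gives a repelling rest point. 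Either route, the crux is producing a clean sign for the divergence (or the trace), which is where I expect the real work to lie.
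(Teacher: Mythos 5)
Your main plan coincides with the paper's own proof, so the substance is fine. For (i) the paper also argues by direct linearization: it eliminates $z=1-x-y$, computes the $2\times 2$ Jacobian at the interior rest point, and for $A_1$ obtains eigenvalues $\tfrac{1\pm 2i\sqrt{26}}{27}$, declaring the computations for $A_2$ and $A_3$ ``similar''; your trace/determinant criterion is an equivalent packaging of the same step. For (ii) the paper performs exactly your entry-by-entry check, though only for $i=1$ (swapping strategies $2$ and $3$); your transpositions $\sigma_1=(2\,3)$, $\sigma_2=(1\,3)$, $\sigma_3=(1\,2)$ all verify, and your observation that the rest points of $A_2$ and $A_3$ are \emph{not} the barycenter is correct (at the barycenter of $A_2$ the rank score of row $1$ against the full entry multiset is $+2$, so $\dot{x}_1>0$ there) --- a detail the paper buries under ``similar.'' One caution about your fallback route: the Dulac/Poincar\'e--Hopf argument as sketched is not yet a proof. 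You exhibit no weight $B$, and it is not obvious that a power of $x_1x_2x_3$ signs the divergence of a quartic field; more importantly, the index argument yields $\det J>0$ only once you already know the interior rest point is unique and nondegenerate --- the boundary heteroclinic cycle gives index sum $+1$, but two sources plus a saddle is perfectly consistent with everywhere-positive divergence --- and uniqueness is itself part of the proposition, established in the paper by solving the cubic system for each game. So the Dulac route does not actually dodge the algebra it was designed to avoid; the linearization route you describe first is the one that carries the proof.
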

\begin{proof}
i) The statement is proved for the game $A_1$ by direct computation. The proofs for games $A_2$ and $A_3$ are similar. 

The IBR dynamics in game $A_1$ can be written as 
\begin{align*} 
F(x,y,z) = \left(
\begin{matrix}
\dot{x}\\
\dot{y}\\
\dot{z}
\end{matrix}\right)
= \left(
\begin{matrix}
x(z-y)(x^2+y+z)\\ 
y(x-z)(x^2+y+z)+yz(y-x)(x+z)\\
z(y-x)(x^2+y+z)-yz(y-x)(x+z)
\end{matrix}\right)
\end{align*}
To identify the interior rest points notice that $\dot{x}=0$ requires $y=z$, in which case $\dot{y}=0$ reduces to
\begin{align*} 
(x-y) \left(x^2+2y -y(x+y)\right)=0.
\end{align*}
Plugging in $x=1-2y$ results in the equation
\begin{align*} 
(1-3y) \left(5y^2-3y+1\right)=0,
\end{align*}
with the only real solution $y=\frac13$. Thus the unique interior rest point of the system $F$ is $(\frac13, \frac13, \frac13)$. 

To identify the local behavior of the system $F$ around the interior steady state, project it from $\mathbb{R}^3$ onto $\Delta_3$, the two-dimensional simplex, to obtain the system
\begin{align*} 
\hat{F}(x,y) = \left(
\begin{matrix}
x(1-x-2y)(x^2-x+1)\\ 
y(2x+y-1)(y^2-y+1)+2xy(y-x)(1-x-y) 
\end{matrix}\right)
\end{align*}
The Jacobian of this system evaluated at the rest point $(\frac13, \frac13)$ is 
\begin{align*} 
D\hat{F}\left(\frac13, \frac13\right)=\frac{1}{27} \left(
\begin{matrix}
-7 &-14 \\
12 & 9
\end{matrix}\right)
\end{align*}
with the eigenvalues $\frac{1\pm2i\sqrt{26}}{27}$. Since both eigenvalues have positive real parts the rest point is repelling. \\
ii) To see the relationship between $A_1$ and $B_1$, write the matrices $A_1, -A_1$, and $B_1$ side by side: 
\begin{center}
\begin{tabular}{ccc}
%%%%%%%%%%%%%%%%%%%%%%%%%%%%%%%%%%%%
$A_1=\left(
\begin{tabular}{ccc}
0 & -3 & 3 \\
2 & 0 & -2 \\
-1 & 1 & 0
\end{tabular}\right)$
&
$-A_1=\left(
\begin{tabular}{ccc}
0 & 3 & -3 \\
-2 & 0 & 2 \\
1 & -1 & 0
\end{tabular}\right)$
&
$B_1=\left(
\begin{tabular}{ccc}
0 & -3 & 3 \\
1 & 0 & -1 \\
-2 & 2 & 0
\end{tabular}\right)$
\end{tabular}
\end{center}
The game $B_1$ can be obtained from $(-A_1)$ by relabeling strategies 2 and 3. Formally the IBR dynamics in game $B_1$ can be written as 
\begin{align*} 
G(x,y,z) = \left(
\begin{matrix}
\dot{x}\\
\dot{y}\\
\dot{z}
\end{matrix}\right)
= \left(
\begin{matrix}
x(z-y)(x^2+y+z)\\ 
y(x-z)(x^2+y+z)-yz(x-z)(x+y)\\
z(y-x)(x^2+y+z)+yz(x-z)(x+y)
\end{matrix}\right)
\end{align*}
so $-F(x,y,z) = G(x, z, y)$. Thus the system $G$ is the time-reversed system $F$, so the eigenvalues of the Jacobian evaluated at the interior rest point of $\hat{G}$ both have negative real parts, and that rest point is an attractor. 
\end{proof}

Games of type $C$ require a different approach, since in them the eigenvalues of the Jacobian at the interior rest point are purely imaginary, so the local stability analysis using the Jacobian does not produce an unambiguous result. However, this obstacle can be overcome once one notices that up to the strategy labels, reversing the order in any C game results in the same game. 
\begin{prop} \label{cgame}
The unique interior rest point in any C game is a center, and any trajectory from the interior forms a closed orbit around it. 
\end{prop}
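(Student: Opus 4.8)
The plan is to exploit a time-reversal (reversibility) symmetry that is peculiar to the C games and absent from the A and B games. The remark preceding the statement is the key: for each C game there is a transposition $\sigma$ of the strategy labels under which relabeling turns the negated matrix $-A$ back into $A$. Concretely, one checks by inspection (exactly as in the preceding proposition) that $P_\sigma(-A)P_\sigma^{-1}=A$ with $\sigma=(1\,3)$ for $C_1$, $\sigma=(2\,3)$ for $C_2$, and $\sigma=(1\,2)$ for $C_3$, where $P_\sigma$ is the associated permutation matrix.

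First I would record two elementary invariances of the IBR construction. Negating every payoff exchanges the events $\{\pi_{jk}<\pi_{im}\}$ and $\{\pi_{jk}>\pi_{im}\}$ while preserving ties, so every conditional switch reverses direction and the vector field is negated: $F_{-A}=-F_A$. Relabeling strategies is a manifest symmetry of (IBRD), giving the equivariance $F_{P_\sigma A P_\sigma^{-1}}(\mathbf{x})=P_\sigma F_A(P_\sigma^{-1}\mathbf{x})$. Combining these with $P_\sigma(-A)P_\sigma^{-1}=A$ yields, with $R:=P_\sigma$,
\[
F_A(R\mathbf{x})=-R\,F_A(\mathbf{x}),
\]
and, since $\sigma$ is a transposition, $R$ is a linear involution ($R^2=I$). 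Thus $\dot{\mathbf{x}}=F_A(\mathbf{x})$ is reversible under $R$, and its flow satisfies $R\,\phi_t=\phi_{-t}\,R$. The centroid $\mathbf{x}^*=(\tfrac13,\tfrac13,\tfrac13)$ is fixed by $R$, so it lies on the fixed line $L:=\mathrm{Fix}(R)\cap\Delta_3$, a median of the triangle through $\mathbf{x}^*$.

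Next I would pass to the projected planar system $\hat F$ on $\Delta_3$ and use reversibility to pin down the linearization. Differentiating $F_A(R\mathbf{x})=-R F_A(\mathbf{x})$ at $\mathbf{x}^*$ gives $RJ=-JR$ for $J:=D\hat F(\mathbf{x}^*)$; taking traces and using $R^2=I$ forces $\operatorname{tr}J=0$ automatically. A short direct computation of $\det J>0$ — the only real calculation needed, and entirely analogous to the Jacobian computation for $A_1$ in the preceding proposition — then shows the eigenvalues form a purely imaginary pair $\pm i\omega$, confirming the text's claim and producing a linear center. This is exactly what distinguishes the C games: in the A and B games no such $\sigma$ exists, the trace need not vanish, and the eigenvalues carry nonzero real part.

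Finally I would upgrade the linear center to a genuine nonlinear center filling the whole interior. Reversibility makes $L$ a global transversal: at any $p\in L\setminus\{\mathbf{x}^*\}$ the relation $F_A(p)=-R F_A(p)$ forces $F_A(p)$ to lie in the $(-1)$-eigenspace of $R$, hence orthogonal to $L$, so the flow crosses $L$ perpendicularly and never vanishes off $\mathbf{x}^*$. Moreover the orbit through any $p\in L$ is $R$-invariant — $R$ sends it to the time-reversed orbit through the same point $Rp=p$, and orbits through a point are unique — hence symmetric about $L$; consequently it cannot spiral, and as soon as it meets $L$ a second time it closes into a loop symmetric about $L$. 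The remaining, and main, difficulty is the global claim that every interior orbit meets $L$ and closes. For this I would invoke the Poincaré–Bendixson theorem together with the facts that $\mathbf{x}^*$ is the only interior rest point and that, by the two-strategy classification of Section~\ref{2str}, each open edge of $\partial\Delta_3$ is a one-way dominance edge, so the boundary carries a heteroclinic RPS cycle with no interior-edge rest points. A non-closed interior orbit would then have to spiral toward $\mathbf{x}^*$ or toward the boundary graphic; the former is excluded because an $R$-symmetric orbit cannot spiral (spiraling would cross $L$ infinitely often, yet an $R$-invariant spiral reflects to a spiral of opposite winding), and a Poincaré-index argument rules out any closed orbit or limit set avoiding $L$, since a loop around $\mathbf{x}^*$ must cross the median $L$ and a loop enclosing no rest point cannot exist. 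Hence the period annulus of the center extends all the way to $\partial\Delta_3$, and every interior trajectory is a closed orbit around $\mathbf{x}^*$. The crux of the write-up is this last global step; the reversibility identity and the sign bookkeeping that produces it are routine once the symmetry $P_\sigma(-A)P_\sigma^{-1}=A$ is spotted.
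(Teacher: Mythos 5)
Your reversibility formalization is, at its core, exactly the paper's ``self-negating'' argument: the identity $F_A(R\mathbf{x})=-RF_A(\mathbf{x})$ with $R$ a transposition, and the closure mechanism (an orbit meeting $\mathrm{Fix}(R)\cap\Delta$ twice is periodic) is precisely the paper's mirror-image step with $\mathbf{x_{-1}}=\mathbf{x_1}=\mathbf{x_0}$ on the bisector; the trace argument $RJ=-JR\Rightarrow\operatorname{tr}J=0$ is a nice addition the paper does not make. But there is a concrete factual error: in the C games the interior rest point is \emph{not} the centroid. For $C_2$ the paper derives it from the cubic $1-6y+16y^2-19y^3=0$, giving $\mathbf{x}^*\approx(0.252,0.374,0.374)$; at $(\tfrac13,\tfrac13,\tfrac13)$ one computes $\dot{y}=-\tfrac19\left(\tfrac19-\tfrac39\right)=\tfrac{2}{81}\neq 0$, so the centroid is not even stationary. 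The rest point does lie on $\mathrm{Fix}(R)$ (since $\dot{x}=0$ forces $y=z$), but both that and its uniqueness must be derived -- your argument invokes uniqueness without proof, whereas the paper establishes it by solving the cubic. Likewise your ``short direct computation of $\det J>0$'' is never carried out, and it would have to be done at the true rest point, expressed through the cubic root, not at the centroid; note the paper deliberately avoids the Jacobian for C games because purely imaginary eigenvalues are inconclusive for the nonlinear claim anyway.

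The genuine gap is the global winding step, which you flag yourself but do not close. Reversibility does efficiently dispatch every \emph{winding} scenario: any orbit that circles $\mathbf{x}^*$ crosses the symmetry segment $L$ twice and is therefore closed, which kills $\omega$-limits equal to $\mathbf{x}^*$, to an interior cycle, or to the boundary graphic. What Poincar\'e--Bendixson leaves open, and your sketch never excludes, is an interior orbit converging to a \emph{single vertex} of the heteroclinic cycle without winding (the vertices are the only boundary rest points, by the Section \ref{2str} classification, but they are saddle-like and one must show their stable sets lie in the boundary). This is exactly the work the paper's six-region nullcline analysis performs: it shows, e.g., that as $z\to 0$ the field tends to $(-xy(x^2+y),\,xy(x^2+y),\,0)$, so near the edge the motion is parallel to it and trajectories in region 1 must exit via the z-nullcline rather than collapse into the boundary, and it computes the nullcline slopes at $\mathbf{x}^*$ (e.g.\ $\frac{23y^2-12y+2}{11y^2-8y+2}>0$) to prevent orbits in regions 2, 4, 6 from reaching the rest point. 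Without a substitute for this analysis -- a vertex linearization argument or an explicit trapping construction -- your claim that every interior orbit meets $L$ twice, and hence the proposition itself, remains unproven.
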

\begin{proof}
The proposition is proved for the game $C_2$. The proof for games $C_1$ and $C_3$ is similar.

Observe that the negative of the game $C_2$ is $C_2$ with strategies 2 and 3 interchanged. 
\begin{center}
\begin{tabular}{ccc}
%%%%%%%%%%%%%%%%%%%%%%%%%%%%%%%%%%%%
$C_2=\left(
\begin{tabular}{ccc}
0 & -3 & 3 \\
2 & 0 & -1 \\
-2 & 1 & 0
\end{tabular}\right)$
&
&
$-C_2=\left(
\begin{tabular}{ccc}
0 & 3 & -3 \\
-2 & 0 & 1 \\
2 & -1 & 0
\end{tabular}\right)$
\end{tabular}
\end{center}

Formally, the IBR dynamics in $C_2$ 
\begin{align*} 
H(x,y,z) = \left(
\begin{matrix}
\dot{x}\\
\dot{y}\\
\dot{z}
\end{matrix}\right)
= \left(
\begin{matrix}
x(z-y)(x^2+y+z)\\ 
y(x-z)(x^2+y+z)-yz(x^2-xy-xz-yz)\\
z(y-x)(x^2+y+z)+yz(x^2-xy-xz-yz)
\end{matrix}\right)
\end{align*}
has the property $-H(x,y,z) = H(x, z, y)$. 

To identify the interior rest points of the system $H$ observe that $\dot{x}=0$ requires $z=y$, so that $x=1-y-z=1-2y$. Plugging the expressions for $x$ and $z$ into $\dot{y}=0$ yields the equation
\[
1-6y+16y^2-19y^3=0,
\]
which only has one root $y \approx 0.374$ in the interval $[0,1]$. Thus $\mathbf{x^*} \approx (0.252, 0.374, 0.374)$ is the unique interior rest point of the system $H$.

To show that the solution trajectories originating in the interior of the simplex form closed orbits around the rest point $\mathbf{x^*}$, we first show that any such solution circles around $\mathbf{x^*}$ and then apply the ``self-negating'' property to conclude that any circular solution trajectory must be indeed a closed orbit. 

\begin{figure}[h]
\begin{center}
\includegraphics{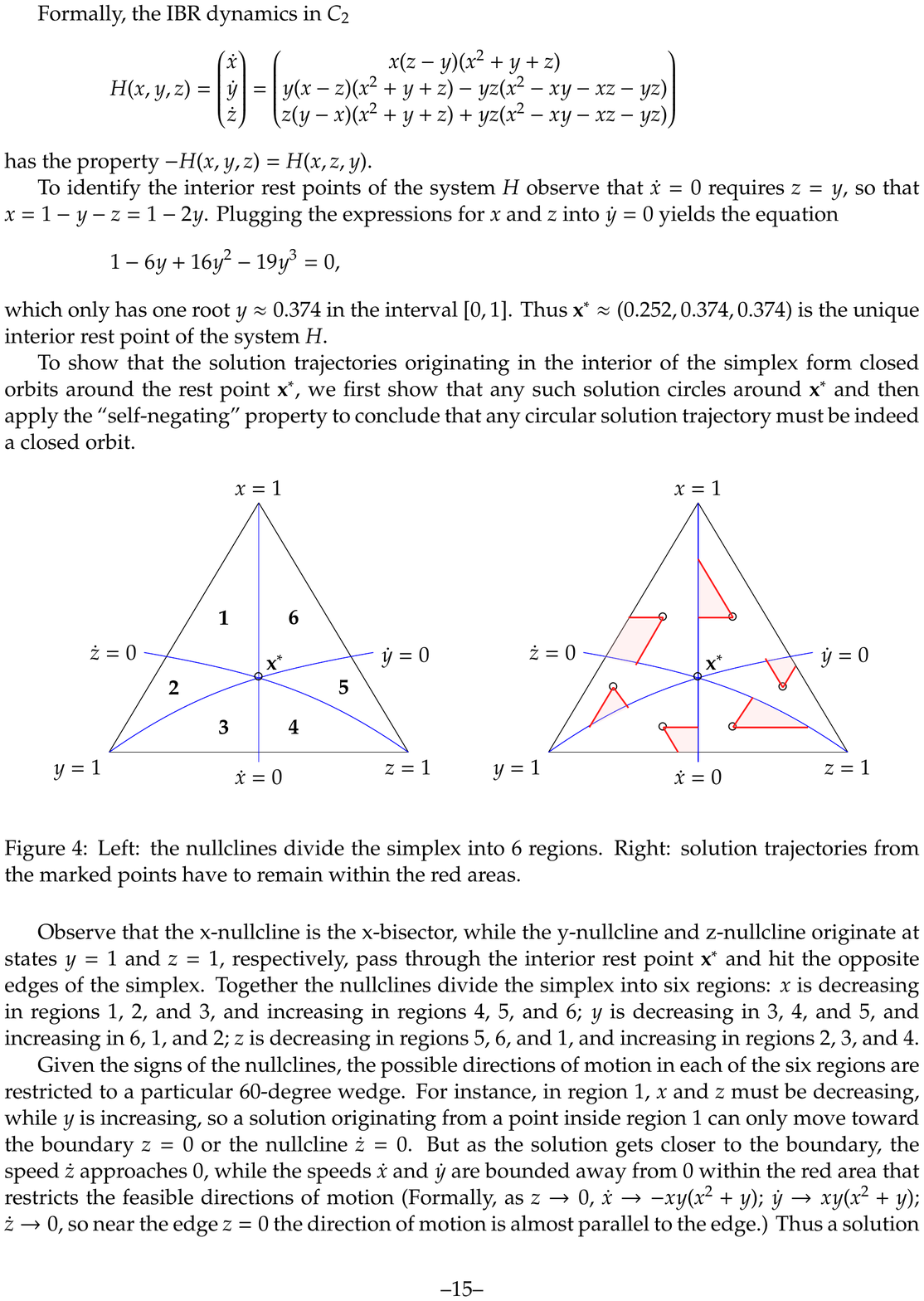}
\end{center}
\caption{Left: the nullclines divide the simplex into 6 regions. Right: solution trajectories from the marked points have to remain within the red areas.}
\end{figure}
Observe that the x-nullcline is the x-bisector, while the y-nullcline and z-nullcline originate at states $y=1$ and $z=1$, respectively, pass through the interior rest point $\mathbf{x^*}$ and hit the opposite edges of the simplex. Together the nullclines divide the simplex into six regions: $x$ is decreasing in regions 1, 2, and 3, and increasing in regions 4, 5, and 6; $y$ is decreasing in 3, 4, and 5, and increasing in 6, 1, and 2; $z$ is decreasing in regions 5, 6, and 1, and increasing in regions 2, 3, and 4.

Given the signs of the nullclines, the possible directions of motion in each of the six regions are restricted to a particular 60-degree wedge. For instance, in region 1, $x$ and $z$ must be decreasing, while $y$ is increasing, so a solution originating from a point inside region 1 can only move toward the boundary $z=0$ or the nullcline $\dot{z}=0$. But as the solution gets closer to the boundary, the speed $\dot{z}$ approaches 0, while the speeds $\dot{x}$ and $\dot{y}$ are bounded away from 0 within the red area that restricts the feasible directions of motion (Formally, as $z \rightarrow 0$, $\dot{x} \rightarrow -xy(x^2+y)$; $\dot{y} \rightarrow xy(x^2+y)$; $\dot{z} \rightarrow 0$, so near the edge $z=0$ the direction of motion is almost parallel to the edge.) Thus a solution originating inside region 1 has to exit it via the z-nullcline. Similarly, the solutions originating in regions 3 and 5 have to escape them via the x- and the y-nullclines, correspondingly. 

Solutions originating in region 2 cannot reach the boundary as $z$ must be increasing. But it is not immediately obvious that they cannot hit the rest point $\mathbf{x^*}$. To exclude this possibility, one has to show that the y-component of every point in the region 2 is at least as high as the y-component of $\mathbf{x^*}$, so that $\mathbf{x^*}$ can not be reached as $y$ must be increasing. This will be the case if the slope of the z-nullcline at $\mathbf{x^*}$ is not lower than the slope of the line $y=const$. To compute that slope, simplify the system $H$ to  
\begin{align*} 
H(x,y,z) = \left(
\begin{matrix}
\dot{x}\\
\dot{y}\\
\dot{z}
\end{matrix}\right)
= \left(
\begin{matrix}
x(z-y)(x^2+y+z)\\ 
y(x-z)(x^2+y+z^2)+2xy^2z\\
z(y-x)(x^2+y^2+z)-2xyz^2
\end{matrix}\right)
\end{align*}
so that the equation for the z-nullcline becomes $(y-x)(x^2+y^2+z)-2xyz=0$. Using $z=1-x-y$, apply the Implicit Function Theorem to compute the slope of the z-nullcline:
\begin{align*} 
\frac{dy}{dx} = \frac{x^2+y^2+z-(y-x)(2x-1)+2yz-2xy}{x^2+y^2+z+(y-x)(2y-1)-2xz+2xy}
\end{align*}
The rest point $\mathbf{x^*}$ is characterized by $z=y$ and $x=1-2y$, so the slope of the z-nullcline at $\mathbf{x^*}$ can be expressed solely in terms of $y$: 
\begin{align*}
\frac{dy}{dx}|_{\mathbf{x}=\mathbf{x^*}} &= \frac{(1-2y)^2+y^2+y-(3y-1)(1-4y)+2y^2-2y(1-2y)}{(1-2y)^2+y^2+y+(3y-1)(2y-1)}\\
			&= \frac{23y^2-12y+2}{11y^2-8y+2} >0 
\end{align*}
Thus at the rest point $\mathbf{x^*}$ the z-nullcline has a positive slope, whereas the slope of the line $y=const$ is 0 (in standard coordinates). Therefore at any point in the interior of region 2, $y > y(\mathbf{x^*})$, and the trajectories originating in that region have to escape it via the y-nullcline.

Similarly, in region 4 the slope of the x-nullcline $z-y=0$ equal to $-\tfrac12$ exceeds the slope of the line $z=const$ equal to $-1$, and in the region 6 the slope of the y-nullcline $(x-z)(x^2+y+z^2)+2xyz=0$ is 
\begin{align*} 
\frac{dy}{dx} = -2 \frac{x^2+y+z^2+(x-z)^2+2yz-2xy}{x^2+y+z^2+(x-z)(1-2z)+2xz-2xy}
\end{align*}
so that at $\mathbf{x^*}$ it becomes
\begin{align*}
\frac{dy}{dx}|_{\mathbf{x}=\mathbf{x^*}} &= -2 \frac{17y^2-10y+2}{11y^2-8y+2} < 0
\end{align*}
whereas the line $x=const$ is vertical. Hence the trajectories originating in both regions 4 and 6 have to escape them via the nullclines. Therefore the solution trajectory from any interior initial condition circles around the interior rest point $\mathbf{x^*}$ by sequentially entering and exiting each of the six regions via the nullclines. 

\begin{figure}[h]
\begin{center}
\includegraphics{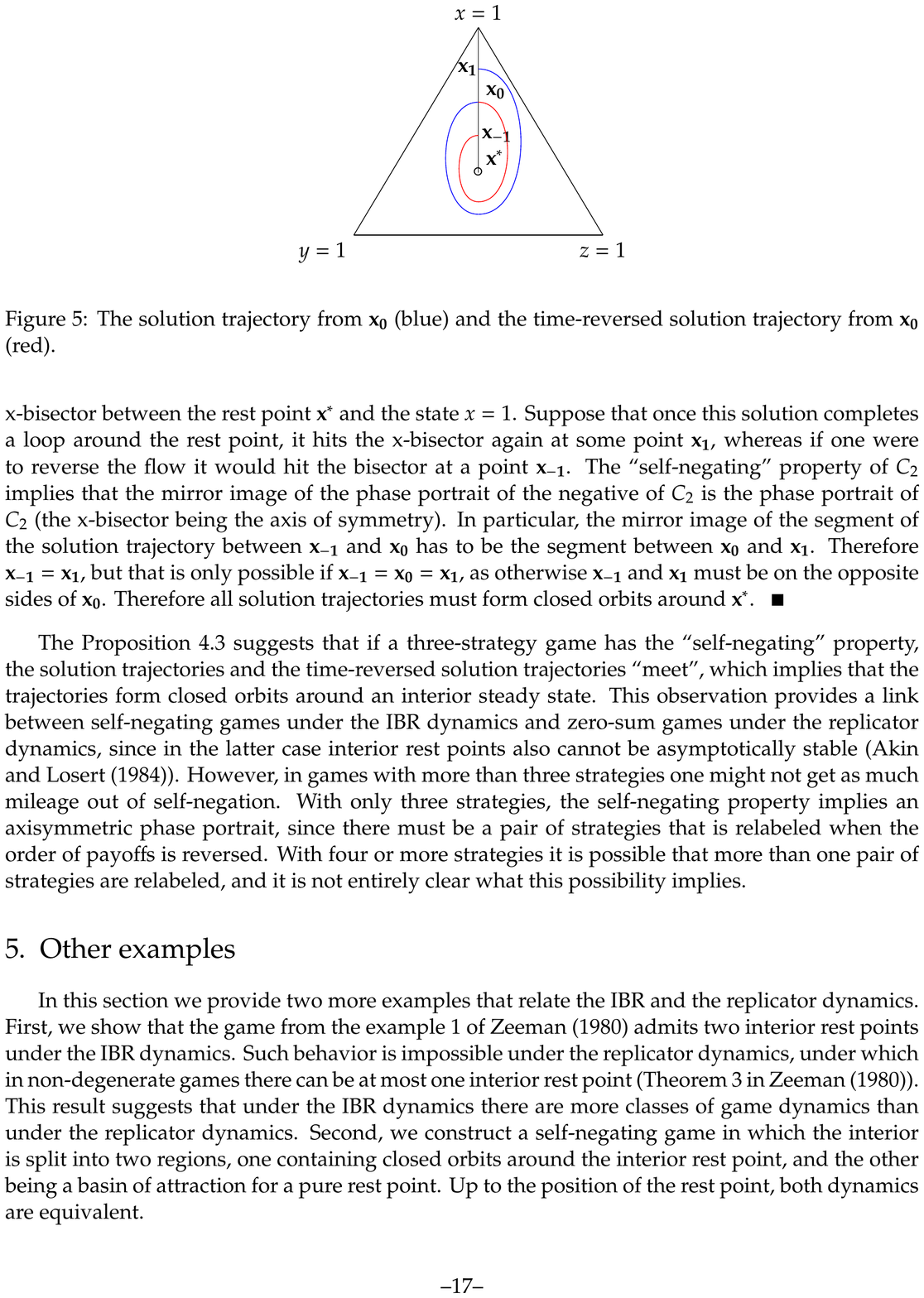}
\end{center}
\caption{The solution trajectory from $\mathbf{x_0}$ (blue) and the time-reversed solution trajectory from $\mathbf{x_0}$ (red).}
\end{figure}

The final step of the proof is to show that every solution trajectory is a closed orbit. Since all trajectories circle around $\mathbf{x^*}$, it suffices to consider a solution originating at some point $\mathbf{x_0}$ on the x-bisector between the rest point $\mathbf{x^*}$ and the state $x=1$. Suppose that once this solution completes a loop around the rest point, it hits the x-bisector again at some point $\mathbf{x_1}$, whereas if one were to reverse the flow it would hit the bisector at a point $\mathbf{x_{-1}}$. The ``self-negating'' property of $C_2$ implies that the mirror image of the phase portrait of the negative of $C_2$ is the phase portrait of $C_2$ (the x-bisector being the axis of symmetry). In particular, the mirror image of the segment of the solution trajectory between $\mathbf{x_{-1}}$ and $\mathbf{x_0}$ has to be the segment between $\mathbf{x_0}$ and $\mathbf{x_1}$. Therefore $\mathbf{x_{-1}}=\mathbf{x_1}$, but that is only possible if $\mathbf{x_{-1}}=\mathbf{x_0}=\mathbf{x_1}$, as otherwise $\mathbf{x_{-1}}$ and $\mathbf{x_1}$ must be on the opposite sides of $\mathbf{x_0}$. Therefore all solution trajectories must form closed orbits around $\mathbf{x^*}$. 
\end{proof}

The Proposition \ref{cgame} suggests that if a three-strategy game has the ``self-negating'' property, the solution trajectories and the time-reversed solution trajectories ``meet'', which implies that the trajectories form closed orbits around an interior steady state. This observation provides a link between self-negating games under the IBR dynamics and zero-sum games under the replicator dynamics. In a zero-sum game with an interior equilibrium, every interior solution trajectory is confined to a level set of a Kullback-Leibler divergence function (see Sec. 9.1.1 of \cite{Sand}). This means the rest point is Lyapunov stable but not asymptotically stable, and that other interior solution trajectories do not converge. The latter need not be the case under the IBR dynamics: game $W$ in Example \ref{Zeeman} is a self-negating game, in which some interior solution trajectories form closed orbits, while others converge to a pure rest point.

In games with more than three strategies one might not get as much mileage out of self-negation. With only three strategies, the self-negating property implies an axisymmetric phase portrait, since there must be a pair of strategies that is relabeled when the order of payoffs is reversed. With four or more strategies it is possible that more than one pair of strategies are relabeled, and it is not entirely clear what this possibility implies.  

\section{Other examples}
In this section we provide two more examples that relate the IBR and the replicator dynamics. First, we show that the game from the example 1 of Zeeman (1980) admits two interior rest points under the IBR dynamics. Such behavior is impossible under the replicator dynamics, under which in non-degenerate games there can be at most one interior rest point (Theorem 3 in \cite{Zee80}). Second, we construct a self-negating game in which the interior is split into two regions, one containing closed orbits around the interior rest point, and the other being a basin of attraction for a pure rest point. Up to the position of the rest point, both dynamics are equivalent.
\begin{example} \label{Zeeman}
Consider the games $Z$ and $W$:  
\begin{center}
\begin{tabular}{ccc}
$Z=\left(
\begin{tabular}{ccc}
0 & 6 & -4 \\
-3 & 0 & 5 \\
-1 & 3 & 0
\end{tabular}\right)$ &&
$W=\left(
\begin{tabular}{ccc}
0 & 4 & 3 \\
1 & 3 & 5 \\
3 & 2 & 6
\end{tabular}\right)$
\end{tabular}
\end{center}
Game $Z$ is the game from the Example 1 in \cite{Zee80}. The IBR dynamics in it 
\begin{align*} 
\dot{x} &= x(1-x)\left(x^2+y-z\right)-2x^2yz\\ 
\dot{y} &= -y(1-y)\left(x+y^2-z\right)+2x^2yz\\
\dot{z} & = z(x-y)\left(z-x^2-y^2\right)
\end{align*}
yields two distinct interior rest points $\mathbf{z_1}=(\frac13,\frac13,\frac13)$ and $\mathbf{z_2}\approx(0.575, 0.088, 0.338)$. At $\mathbf{z_1}$ the relevant eigenvalues are $\tfrac{1}{27}\left(-1\pm2i \sqrt5\right)$, so this rest point is stable, whereas at $\mathbf{z_2}$ the eigenvalues are 0.41 and -0.041, so it is unstable.

%Let $F_{ij}$ be the flow from strategy $i$ to strategy $j$ in game $Z$:
%\begin{align*} 
%F_{12} &= x^2y(z-x) + xy(z-y) \\ 
%F_{23} &= y^2 z(y-x) +yz (x-z)\\
%F_{31} & = x^2z(x-y) + xz(y-z) 
%\end{align*}
%Since $x=y=z=\frac13$ at $\mathbf{z_1}$, all flows are 0, so $\mathbf{z_1}$ is a dead spot. At $\mathbf{z_2}$ all flows are approximately 0.0058, so there is a cycle of switches. 
\begin{center}
\begin{figure}[h]
\centering
\begin{tabular}{ccc}
$\includegraphics[scale=0.5]{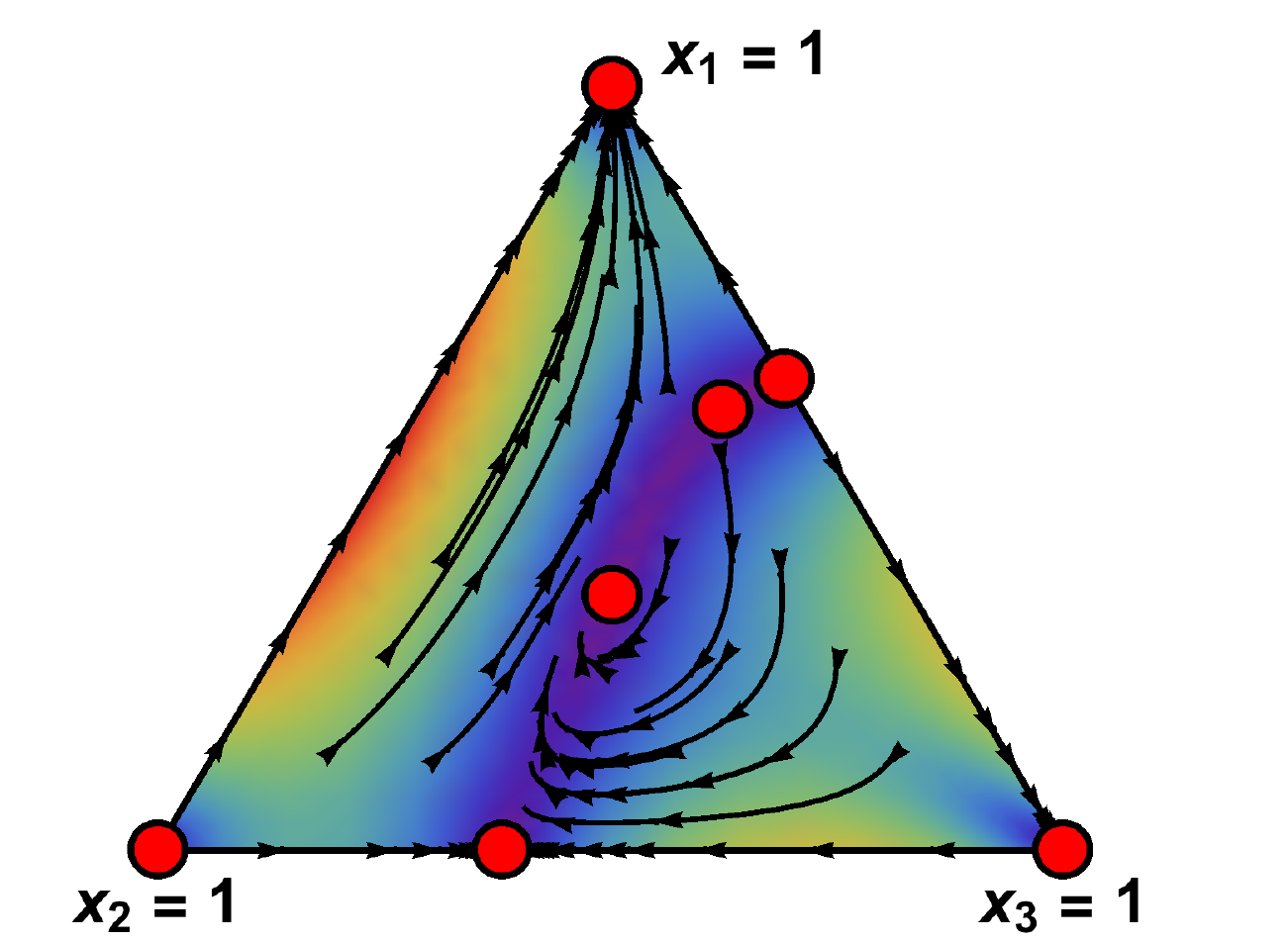}$ && $\includegraphics[scale=0.5]{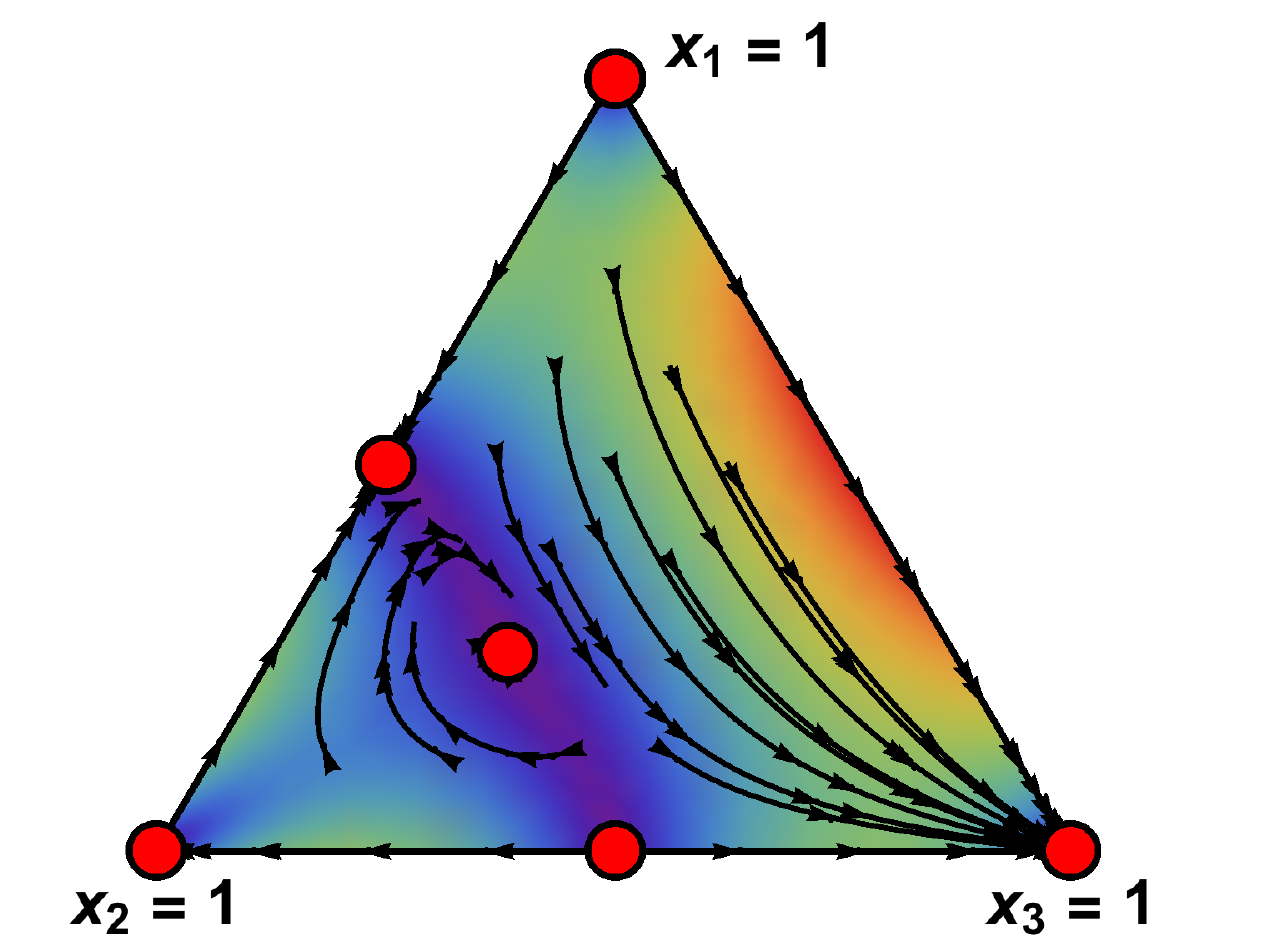}$ 
\end{tabular}
\caption{Some solution trajectories in games $Z$ (left) and $W$ (right).}
\end{figure}
\end{center}
Game $W$ has the self-negating property, so there are closed orbits around the interior rest point, but a part of the interior of the simplex is the basin of attraction of the state $x_3=1$. In the time-reversed game $x_1=1$ becomes the attractor, while the interior rest point preserves its region with the closed orbits.
\end{example}
\section{Conclusion}
This paper investigated the properties of an imitative rule that ignores any cardinal information about the game's payoffs. Agents switch to strategies which they perceive as better based on the comparison of their realized payoffs to that of a random member of the population. Since this behavioral rule bears a similarity to the pairwise proportional imitation of \cite{Sch98}, the resulting ordinal imitative dynamics begs comparison with the replicator dynamics arising from the PPI.               

We demonstrate that while the IBR dynamics does not possess the payoff monotonicity and Nash stationarity properties of the replicator dynamics in general, the two dynamics are topologically equivalent in two-strategy games. We also conjecture that they generate the same types of behavior in Rock-Paper-Scissors games. In other cases, the IBR dynamics can generate behavior that is impossible under the replicator dynamics. 

Better understanding the relationship between the two dynamics and investigating the self-negating property in games with more than three strategies would be the two most important directions for future research. 
 % <======================================= mwe.bbl
\end{singlespace}
\end{document}